\def\paperTitle{Geometry Field Splatting with Gaussian Surfels}
\def\authorBlock{
    Kaiwen Jiang\textsuperscript{1} \quad
    Venkataram Sivaram\textsuperscript{1} \quad
    Cheng Peng\textsuperscript{2} \quad
    Ravi Ramamoorthi\textsuperscript{1} \\
    \textsuperscript{1}University of California, San Diego\quad\textsuperscript{2}Johns Hopkins University
}
\newif\ifreview 
\newif\ifarxiv \newcommand{\arxiv}{\arxivtrue}
\newif\ifcamera 
\newif\ifrebuttal 
\ifreview \usepackage[review]{cvpr} \fi
\ifarxiv \usepackage[pagenumbers]{cvpr} \fi
\ifrebuttal \usepackage[rebuttal]{cvpr} \fi
\ifcamera \usepackage{cvpr} \fi
\ifcamera \usepackage[accsupp]{axessibility} \fi
\newcommand{\R}[1]{{%
    \textbf{%
        \ifstrequal{#1}{1}{\textcolor{red}{R#1}}{%
        \ifstrequal{#1}{2}{\textcolor{blue}{R#1}}{%
        \ifstrequal{#1}{3}{\textcolor{magenta}{R#1}}{%
        \ifstrequal{#1}{4}{\textcolor{teal}{R#1}}{%
                           \textcolor{cyan}{R#1}%
        }}}}%
    }%
}}
\renewcommand{\paragraph}{%
  \@startsection{paragraph}{4}%
  {\z@}{0.2ex \@plus 0.3ex \@minus .2ex}{-1em}%
  {\normalfont\normalsize\bfseries}%
}
\renewcommand{\baselinestretch}{0.98}  %
\newcommand*{\addFileDependency}[1]{
  \typeout{(#1)}
  \@addtofilelist{#1}
  \IfFileExists{#1}{}{\typeout{No file #1.}}
}
\newcommand*{\myexternaldocument}[1]{
    \externaldocument{#1}
    \addFileDependency{#1.tex}
    \addFileDependency{#1.aux}
}
\definecolor{cvprblue}{rgb}{0.21,0.49,0.74}
\crefname{section}{Sec.}{Secs.}
\crefname{table}{Table}{Tables}
\crefname{figure}{Fig.}{Figs.}
\crefname{equation}{Eqn.}{Eqns.}
\crefname{theorem}{Theorem.}{Theorems.}
\crefname{lemma}{Lemma.}{Lemma.}
\newtheorem{theorem}{Theorem}
\newtheorem{lemma}{Lemma}
\newtheorem{property}{Property}
\ifarxiv \crefname{appendix}{App.}{Apps.}
\else \crefname{appendix}{Suppl.}{Suppls.} \fi
\ifarxiv \myexternaldocument{_supplementary} \fi
\begin{document}
\title{\paperTitle}
\author{\authorBlock}

\maketitle

\begin{abstract}
Geometric reconstruction of opaque surfaces from images is a longstanding challenge in computer vision, with renewed interest from volumetric view synthesis algorithms using radiance fields.  %
We leverage the geometry field proposed in recent work for stochastic opaque surfaces, which can then be converted to volume densities.  We adapt Gaussian kernels or surfels to splat the geometry field rather than the volume, enabling precise reconstruction of opaque solids.  Our first contribution is to derive an efficient and almost exact differentiable rendering algorithm  for geometry fields parameterized by Gaussian surfels, while removing current approximations involving Taylor series and no self-attenuation. 
Next, we address the discontinuous loss landscape when surfels cluster near geometry, showing how to guarantee that the rendered color is a continuous function of the colors of the kernels, irrespective of ordering. 
Finally, we use latent representations with spherical harmonics encoded reflection vectors rather than spherical harmonics encoded colors to better address specular surfaces.  
We demonstrate significant improvement in the quality of reconstructed 3D surfaces on widely-used datasets.

\end{abstract}
\section{Introduction}
\label{sec:intro}

{The reconstruction of opaque surfaces from a collection of calibrated RGB images is a classic challenge in computer vision. This problem has received broader attention in recent years due to the success of Neural Radiance Field (NeRF)~\cite{nerf2020} and other volumetric novel view synthesis algorithms. Although methods like NeRF \cite{nerf2020} are effective for view synthesis, they fall short in geometric reconstruction applications, since they do not explicitly reconstruct an accurate surface. Follow up methods, which are specialized towards surface reconstruction, have proposed instead that an explicit geometry representation should parameterize the volume density field.
}

{The typical approach taken by previous methods \cite{idr, volsdf, neus, neus2, neuralangelo, neudf, neuraludf, object_as_volumes} is to parameterize signed or unsigned distance fields with a neural network, which can then be converted to a density field for volume rendering \cite{neus, object_as_volumes}. Eikonal regularization \cite{eik_loss} is often used to promote smoothness of the reconstructed surfaces. In particular, \citet{object_as_volumes} provides a theoretical bridge between surface representations and volume rendering with stochastic geometry, and generalizes the use of distance fields to introduce a general stochastic geometry field, which we refer to as simply a \emph{geometry field}. \citet{dipole} shows that such geometry fields can be parameterized with a %
Poisson field.}

\begin{figure}[tp]
    \centering
    \includegraphics[width=\linewidth]{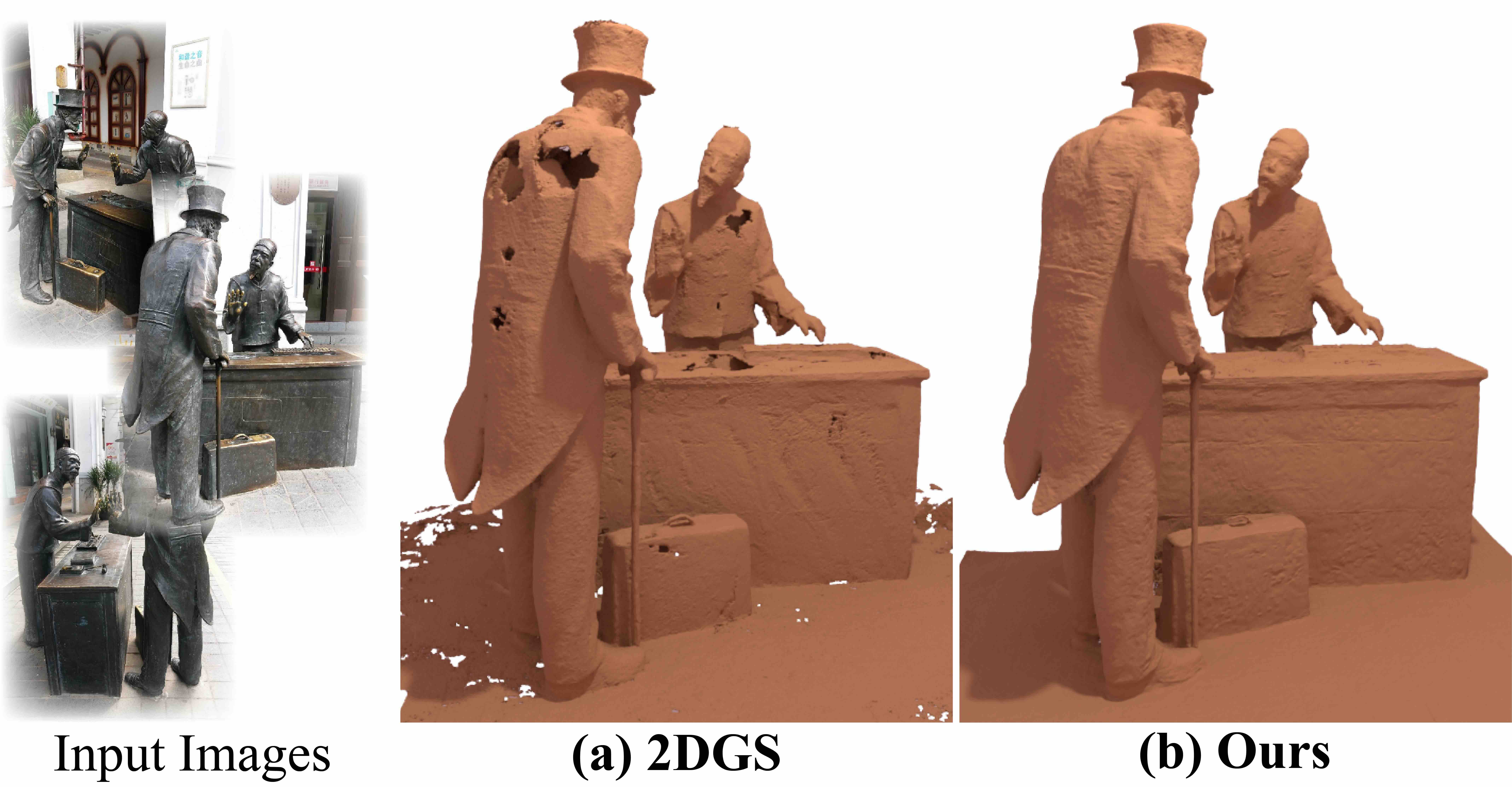}
    \caption{
    We introduce a geometry reconstruction method from a set of calibrated RGB images by splatting a geometry field with Gaussian surfels.
    We enable almost exact and efficient differentiable rendering. Compared to 2DGS, we reach better geometry quality and achieve overall smooth and detailed geometry without having cracks or holes.
    The reader may wish to zoom into the electronic version in the figures. (Note: object colors used are for visualization, not original object color).  
    }
    \label{fig:teaser}
\end{figure}

Recent advances in Gaussian splatting based methods \cite{3dgs} have motivated a new class of surface modeling methods (e.g., \cite{2dgs, radegs, gaussian_surfel, GOF, guedon2023sugar, wolf2024gsmesh}). These methods utilize kernels, i.e., 3D Gaussians or 2D Gaussians, to parameterize a density or opacity field and then perform volume splatting \cite{3dgs, volume_splatting} for rendering. Geometry is then recovered from the converged density or opacity field through varying definitions of the surface. 
The benefit compared to the neural methods comes from the speed and ability to generalize to complex scenes (e.g., \cite{surfel-human-1, surfel-human-2, zhao2024surfel}). Despite the success achieved by these approaches, the surface is not theoretically clearly defined for a general density field, which incurs approximation that limits the reconstruction quality.

In this paper, we {adapt a modern volumetric representation based on Gaussian kernels to splat the geometry field that provides precise geometry for reconstructing opaque solids. We }show that {we can} utilize the kernels, i.e., 2D Gaussians or Gaussian surfels, to parameterize the geometry field and still use the efficient volume splatting algorithm for differentiable rendering. We carefully remove the approximations in the original volume splatting algorithm to ensure almost exact rendering, except for the use of global sorting to approximate per-ray sorting. We also observe that, to make the represented stochastic geometry gradually converge to the deterministic case, kernels need to be clustered around the surface. Such clustering leads to an unstable ordering of kernels in volume splatting and the rendered color a discontinuous function of the properties of kernels. The view synthesis loss then creates a discontinuous loss landscape that is detrimental for optimization. We propose a solution to ensure that the rendered color is a continuous function of the properties of the kernels.  %
Finally, inspired by \cite{gaussianshader, voxurf, refnerf, wang2025unisdf, verbin2024nerf, ge2023ref}, we also explore different choices of color representation other than spherical harmonics \cite{sh} to tackle specular surfaces. 
Through extensive experiments, we show that such a design with geometry clearly defined can significantly improve the quality of the reconstructed surface (see Figs.~\ref{fig:teaser} and~\ref{fig:qualitative}).

{Our main contributions are summarized as follows:
\begin{itemize}
    \item We derive an efficient and differentiable rendering algorithm for geometry fields parameterized by Gaussian surfels (Sec.~\ref{section-method-geo-field-splatting}). Besides, we carefully analyze and reduce the approximations in the rendering algorithm.
    \item We expose and remedy the discontinuity of the loss landscape, which ultimately helps our method to converge from stochastic geometry to a deterministic shape (Sec.~\ref{section-method-continuous-loss}).
\end{itemize}}
\section{Related Work}
\label{sec:related}

\paragraph{Multi-view Stereo Surface Reconstruction.} The computer vision community has studied 3D reconstruction from multi-view images for decades using multi-view stereo \cite{multi-view-stereo, multi-view-stereo-2, 4359315, 5226635, 5989831, 1206509, Bleyer2011PatchMatchS}. Despite their high accuracy, these methods do not provide complete geometry. Therefore, CNN-based approaches (e.g., \cite{laga2020survey, zhu2021deep, WANG2021102102, yao2018mvsnet, yu2020fast, gu2020cascade, ding2022transmvsnet, kd-mvs, cl-mvsnet, patchmatchnet,effimvs,riav}) were introduced to overcome this limitation%
{, but they cannot reconstruct reflective surfaces}. %
Neural surface reconstruction approaches are {later} introduced to {address more general surface types.} %

\paragraph{Neural Surface Reconstruction.} Thanks to the advance of NeRF \cite{nerf2020}, some works (e.g., \cite{idr, neus, neus2, long2022sparseneus, volsdf, neuralangelo, object_as_volumes, ge2023ref, wang2025unisdf}) propose to utilize various representations, such as an MLP \cite{nerf2020} or Hash-grid \cite{instant-ngp}, to parameterize a SDF field and then convert it into a density field for rendering.  The mesh is then extracted using marching cubes from the SDF field after convergence. Identifying the limitation of the SDF field for open-surface objects, some other works (e.g., \cite{neudf, neuraludf}) propose to leverage an UDF field instead of a SDF field. Inspired by multi-view stereo (MVS), another group of methods (e.g., \cite{neuralwarp, geoneus, mvsneus}) add a multi-view consistency loss to further enhance the quality of reconstructed geometry.

\paragraph{Gaussian Splatting based Surface Reconstruction.} \citet{3dgs} propose to utilize 3D Gaussians to parameterize the density field for differentiable view synthesis. Owing to its efficiency and high quality, it has become popular to extend Gaussian splatting (GS) for geometry reconstruction. SuGaR \cite{guedon2023sugar} constrains the 3D Gaussians to be flattened for extracting the geometry but still exhibits holes. 
2DGS \cite{2dgs} and \citet{gaussian_surfel} then propose to directly use the 2D Gaussians to parameterize the density field and utilize depth-normal consistency loss or a monocular normal prior to ensure the smoothness of extracted geometry. 
In contrast, RaDe-GS \cite{radegs} and GOF \cite{GOF} still rely on 3D Gaussians, but propose custom surface definitions for the 3D Gaussian. %
Surfaces are extracted through TSDF fusion \cite{10.1145/3596711.3596726, open3d} or tetrahedral grids \cite{GOF}. 
Different from these methods, we parameterize the geometry field with 2D Gaussians and then convert it into the density field for rendering with refined volume splatting. We still use TSDF fusion to extract the geometry and achieve significant improvements in terms of geometry quality. Concurrently, PGSR \cite{pgsr} introduces a multi-view constraint into the optimization.

\begin{figure}[tp]
    \centering
    \includegraphics[width=\linewidth]{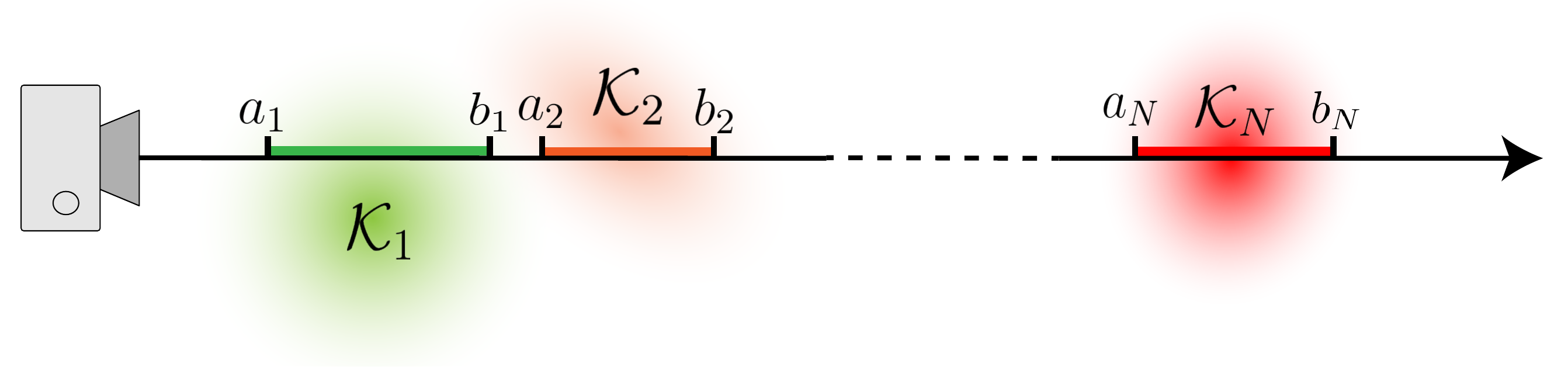}
    \caption{Illustration of a ray intersecting with non-overlapping kernels $\mathcal{K}_1, \mathcal{K}_2, ..., \mathcal{K}_N$, which are sorted based on their intersection intervals $[a_1, b_1], [a_2, b_2], ..., [a_N, b_N]$.}
    \label{fig:vs}
\end{figure}

\section{Preliminaries}
\label{section-method-preliminary}
We first introduce and analyze the volume splatting algorithm and volume representation of opaque solids.
\paragraph{Volume Splatting.} 
Consider a ray $\mathbf{x}(t)=\mathbf{o}+t\boldsymbol{\omega}$, where $\mathbf{o}$ denotes the camera origin, $\boldsymbol{\omega}$ denotes the ray direction and $t > 0$ denotes the depth. Volume splatting \cite{volume_splatting} aims to evaluate the following volume rendering equation:
\begin{equation}
\label{equation-volume-rendering}
    \mathbf{C} = \int_{0}^{\infty} \mathbf{c}(t) \sigma(t) \exp\left(-\int_{0}^{t}\sigma(t')\,dt'\right) dt, 
\end{equation}
where $\mathbf{c}(t), \sigma(t)$ denote the color and density values{, or the extinction coefficent of the volume,} at $\mathbf{x}(t)$.

As shown in \cref{fig:vs}, {\citet{volume_splatting} propose to decompose the density field}
as a set of independent kernels, each of which is chosen to have finite support range by cutoff, i.e., $\sigma(t)=\sum_{i=1}^{N} w_i \mathcal{K}_i(t)$, where $N$ denotes the number of kernels, and $w_i$ and $\mathcal{K}_i(t)$ respectively denote the weight and evaluated value at $t$ for the $i^\text{th}$ kernel. The intersections between these kernels and the ray are assumed to be non-overlapping and, therefore, these kernels are sorted along the ray based on the intersection interval. {The intersection interval of the $i^\text{th}$ kernel is denoted as $[a_i, b_i]$, i.e., $\mathcal{K}_i(t)\neq0$ if and only if $a_i\leq t\leq b_i$.} The color values inside the $i^\text{th}$ kernel are defined to be constant as $\mathbf{c}_i$. 

By ignoring the self-occlusion and expanding the transmittance term with the Taylor Series, \Cref{equation-volume-rendering} can be rewritten as:
\begin{equation}
\label{equation-volume-splatting}
    \mathbf{C} = \sum_{i=1}^N \mathbf{c}_i \rho_i \prod_{j=1}^{i-1} (1 - \rho_j), 
\end{equation}
where $\rho_i$ is the integration of density value along the ray, when considering only the $i^\text{th}$ kernel. Namely, $\rho_i = \int_{a_i}^{b_i} \sigma(t)dt = \int_{a_i}^{b_i} w_i \mathcal{K}_i(t)\, dt$, and it is called the footprint of the $i^\text{th}$ kernel with respect to the current ray. 
Notice that due to the expansion, $\rho_i \in [0, 1]$. %

Therefore, it is important that the footprint function be easy to evaluate efficiently, such that the evaluation of \cref{equation-volume-splatting} can be efficient.
Volume splatting chooses to use the 3D Gaussian as the kernel, and analytically solve the footprint function by approximating the perspective transform.

In summary, volume splatting makes the following assumption and approximation:
\begin{itemize}
    \item Assume there is no overlapping of any two intersections between kernels and the ray.
    \item Self-occlusion of each kernel is ignored, and transmittance terms and footprint functions are approximated.
\end{itemize}

Even though remarkable view synthesis from a set of calibrated images is achieved by making the volume splatting differentiable \cite{3dgs}, the non-overlapping assumption leads to a discontinuous view synthesis loss function and the constraint that the footprint function is within the range of $[0, 1]$. We will show that these problems are detrimental for the surface reconstruction and propose solutions. Besides, \citet{3dgs} uses a global sorting to approximate per-ray sorting.

\paragraph{Stochastic Geometry Representation for Opaque Solid.}
{\citet{object_as_volumes} propose that} for an opaque solid existing in the $\mathbb{R}^3$ space, each point $\mathbf{x}\in\mathbb{R}^3$ is associated with a random variable $G(\mathbf{x}) \in \mathbb{R}$. Without loss of generality, the random variable is chosen to obey the normal distribution, i.e., $G(\mathbf{x})\sim\mathcal{N}(\mu(\mathbf{x}), 1/s(\mathbf{x}))$, where $\mu(\mathbf{x})$ and $s(\mathbf{x})$ denote the mean and inverse of standard deviation. Each point $\mathbf{x}$ is then associated with an occupancy value $o(\mathbf{x})$ and vacancy value $v(\mathbf{x})$, defined as:
\begin{equation}
\label{equation-occupancy-vacancy}
\begin{aligned}
    o(\mathbf{x}) = \Pr\{G(\mathbf{x})\geq0\} &= \Psi(\mu(\mathbf{x})s(\mathbf{x})) \\
    v(\mathbf{x}) = \Pr\{G(\mathbf{x})<0\} &= \Psi(-\mu(\mathbf{x})s(\mathbf{x})), 
\end{aligned}
\end{equation}
where $\Psi(\cdot)$ denotes the cdf of the standard normal distribution. The solid is defined at $\mathbbm{1}_{\{G(\mathbf{x})\geq0\}}$.

From \cite{object_as_volumes}, the density at $\mathbf{x}$ is defined as:
\begin{equation}
\label{equation-object-volume-density}
\begin{aligned}
    \sigma(\mathbf{x}) =& \frac{||\nabla v(\mathbf{x})||}{v(\mathbf{x})} \cdot \left(\alpha(\mathbf{x})|\boldsymbol{\omega}\cdot\mathbf{n}(\mathbf{x})| + \frac{1-\alpha(\mathbf{x})}{2}\right) \\
    =& \frac{\psi(-\mu(\mathbf{x})s(\mathbf{x}))}{\Psi(-\mu(\mathbf{x})s(\mathbf{x}))}||\nabla (\mu(\mathbf{x})s(\mathbf{x}))|| \cdot \\
    & \left(\alpha(\mathbf{x})|\boldsymbol{\omega}\cdot\mathbf{n}(\mathbf{x})| + \frac{1-\alpha(\mathbf{x})}{2}\right),
\end{aligned}
\end{equation}
where {$\psi(\cdot)$ denotes the pdf of the standard normal distribution,} $\mathbf{n}(\mathbf{x})= -\nabla(\mu(\mathbf{x})s(\mathbf{x})) / || \nabla (\mu(\mathbf{x})s(\mathbf{x})) ||$ and $\alpha(\mathbf{x}) \in [0, 1]$ denotes an anisotropic parameter which is close to $1$ near the surface, and close to $0$ in the interior of the opaque solid. As $s(\mathbf{x})\to\infty$, the stochastic geometry converges to the deterministic case.

Notice that, in other works (e.g.,  \cite{object_as_volumes, neus}), the $\mu(\mathbf{x})$ is defined to be a SDF field. Defining it as a SDF field allows the usage of Eikonal loss to promote the smoothness, but there is no constraint over the meaning of $\mu(\mathbf{x})$.

Since $\mu(\mathbf{x})$ and $s(\mathbf{x})$ are always together in \cref{equation-object-volume-density}, we define that
\begin{equation}
\label{equation-geometry-field}
    F(\mathbf{x}) := \mu(\mathbf{x})s(\mathbf{x}), 
\end{equation}
which is called the \emph{geometry field} as it indicates the underlying geometry. When $\forall \mathbf{x} \in\mathbb{R}^3, |F(\mathbf{x})|\to\infty$, the represented geometry is assumed to be converged to the deterministic case. \cref{equation-object-volume-density} can be then written as:
\begin{equation}
\label{equation-density-ours-before}
    \sigma(\mathbf{x}) = \frac{\psi(-F(\mathbf{x}))}{\Psi(-F(\mathbf{x}))}||\nabla F(\mathbf{x})|| \cdot \left(\alpha(\mathbf{x})|\boldsymbol{\omega}\cdot\mathbf{n}(\mathbf{x})| + \frac{1-\alpha(\mathbf{x})}{2}\right), 
\end{equation}
where $\mathbf{n}(\mathbf{x})=-\nabla F(\mathbf{x}) / ||\nabla F(\mathbf{x})||$.

\section{Method}
\label{sec:method}

In \cref{section-method-geo-field-splatting}, we describe how we can reduce the approximations taken by the splatting algorithm. We then derive a refined splatting method which enables differentiable rendering of the opaque solid represented by the geometry field. Next, in \cref{section-method-continuous-loss}, we propose a novel solution to address the issue of discontinuous loss landscapes which arises as the stochastic geometry gradually becomes deterministic. Finally, \cref{section-method-color-representation} introduces an improvement in color representation to account for specular surfaces.

\subsection{Geometry Field Splatting}
\label{section-method-geo-field-splatting}
\paragraph{Revisit Splatting Algorithm.}
As the volume splatting is an \emph{approximate} rendering algorithm, we aim to reduce the approximations, and investigate the situation under which the refined volume splatting algorithm is \emph{exact}.

Specifically, we do not ignore the self-occlusion and do not expand the transmittance term with the Taylor series. \cref{equation-volume-splatting} is then replaced with:
\begin{equation}
\label{equation-volume-splatting-ours-first}
    \mathbf{C} = \sum_{i=1}^N \mathbf{c}_i (1-\exp(-\rho_i)) \prod_{j=1}^{i-1} \exp(-\rho_j). 
\end{equation}
Please find the derivation in A.1 of the supplementary document. 
Notice that then $\rho_i \in \mathbb{R}_+$ with no further restrictions.

We are now left with the non-overlapping assumption and footprint function approximation. We notice the following property:
\begin{property}
    When the intersection intervals of two and only two kernels $\mathcal{K}_i, \mathcal{K}_j$ fully overlap{, i.e., $a_i = a_j \land b_i = b_j$}, and they have the same color, it is equivalent to having one kernel $\mathcal{K}_{ij}$ there with $\rho_{ij}=\rho_{i}+\rho_{j}$ and the same color.
\end{property}
\begin{proof}
    As these two kernels are the only two kernels whose intersection intervals fully overlap at $[a_i, b_i]$%
    , we can conclude that $j=i+1$ without losing generality. Therefore, $(1-\exp(-\rho_i)) + (1 - \exp(-\rho_j))\exp(-\rho_i)=1-\exp(-(\rho_i + \rho_j))$. Notice that the alpha-blended value of these two kernels is equivalent to having a single kernel there.
\end{proof}
This property can be extended into the following form through mathematical induction:
\begin{lemma}
\label{lemma-k-overlap}
    When the intersection intervals of $m$ kernels $\mathcal{K}_{i_1}, \mathcal{K}_{i_2}, ..., \mathcal{K}_{i_m}$ fully overlap{, i.e., $a_{i_1} = a_{i_2} = ... = a_{i_m} \land b_{i_1} = b_{i_2} = ... = b_{i_m}$}, and they have the same color, it is equivalent to having one kernel $\mathcal{K}$ there with $\rho=\rho_{i_1}+\rho_{i_2}+...+\rho_{i_m}$ and the same color.
\end{lemma}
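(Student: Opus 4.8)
The plan is to prove \cref{lemma-k-overlap} by induction on the number $m$ of fully overlapping kernels, using the two-kernel case (Property~1) both as the base case and as the final merging step. Before the induction, I would first record the structural reduction that makes everything work: because the splatting framework assumes no two kernel–ray intersections overlap, and $\mathcal{K}_{i_1},\dots,\mathcal{K}_{i_m}$ all occupy \emph{exactly} the same interval $[a,b]$, no other kernel can intersect $[a,b]$. Hence, after sorting along the ray, these $m$ kernels form a consecutive block, and without loss of generality we may relabel them $\mathcal{K}_k,\mathcal{K}_{k+1},\dots,\mathcal{K}_{k+m-1}$ — exactly the "$j=i+1$ without loss of generality" move used in the proof of Property~1, now applied to a block rather than a pair. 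Write $\mathbf{c}$ for their common color and $T=\prod_{j=1}^{k-1}\exp(-\rho_j)$ for the transmittance accumulated ahead of the block.

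For the inductive step, assume the statement for $m-1$ kernels and apply it to $\mathcal{K}_k,\dots,\mathcal{K}_{k+m-2}$: these collapse into a single kernel $\mathcal{K}'$ on $[a,b]$ with footprint $\rho'=\rho_{i_1}+\dots+\rho_{i_{m-1}}$ and color $\mathbf{c}$. We are then in the hypothesis of Property~1 — two kernels $\mathcal{K}'$ and $\mathcal{K}_{i_m}$ with identical intervals and identical color — which merge into one kernel $\mathcal{K}$ with footprint $\rho=\rho'+\rho_{i_m}=\rho_{i_1}+\dots+\rho_{i_m}$ and color $\mathbf{c}$. The one remaining verification is that this local replacement leaves the contributions of all kernels \emph{after} the block unchanged: the transmittance the block passes downstream is $T\prod_{l=1}^{m}\exp(-\rho_{i_l})=T\exp(-\rho)$, which is exactly the transmittance that a single kernel of footprint $\rho$ would pass along, so every later term of \cref{equation-volume-splatting-ours-first} is untouched; the leading factor $T$ is likewise untouched. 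Therefore the full rendered color $\mathbf{C}$ is unchanged, which is precisely what "equivalent" means.

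Alternatively — and this is really the engine underlying Property~1 — one can bypass the induction and evaluate the block's contribution directly. Setting $S_l=\sum_{p\le l}\rho_{i_p}$, the term for the $l$-th kernel of the block equals $(1-\exp(-\rho_{i_l}))\exp(-S_{l-1})=\exp(-S_{l-1})-\exp(-S_l)$, and summing over $l=1,\dots,m$ telescopes to $1-\exp(-S_m)=1-\exp(-\rho)$, matching a single kernel of footprint $\rho$; the downstream transmittance check is identical to the one above. I would likely present the inductive argument (as the lemma statement advertises) and mention the telescoping identity as the clean computational core.

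There is no deep obstacle here; the only points that need care are the "consecutive block" reduction (which genuinely uses the non-overlapping assumption, not just the equal-interval hypothesis) and the bookkeeping showing the equivalence holds for the total $\mathbf{C}$ rather than merely for the isolated sub-sum of the block — since in principle a naive collapse could alter the transmittance seen by subsequent kernels, and one must check that it does not.
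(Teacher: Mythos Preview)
Your proof is correct and follows essentially the same approach as the paper, which simply asserts that the lemma follows from Property~1 ``through mathematical induction'' without giving any further detail. Your argument fills in exactly that induction (including the consecutive-block reduction and the downstream-transmittance check the paper leaves implicit), and your telescoping alternative cleanly exhibits the underlying identity directly.
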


Therefore, we can define an operator $\bigoplus$, which satisfies the commutative, associative and identity properties, such that $\mathcal{K}_{ij}=w_i\mathcal{K}_i\bigoplus w_j\mathcal{K}_j$ where $\rho_{ij} = \rho_{i} + \rho_{j}$. When the kernels on the ray have either non-overlapping or fully overlapped intersection intervals, the refined splatting algorithm defined by \cref{equation-volume-splatting-ours-first} is an \emph{exact} rendering algorithm of density field defined as: $\sigma(\mathbf{x})={\bigoplus_{i=1}^N}w_i\mathcal{K}_i(\mathbf{x})${, because we remove the approximations when evaluating \cref{equation-volume-rendering}}.

By considering the remaining approximation, we can then reach the following theorem:
\begin{theorem}
\label{theorem:unbiased}
    When 1) the intersection intervals of any two kernels are non-overlapping, or fully overlapped in which case these two kernels have the exact same color value, and, 2) the footprint function is accurately calculated, then the sorting of kernels is well-defined and \cref{equation-volume-splatting-ours-first} gives an exact rendering of a density field which is defined as: $\sigma(\mathbf{x})={\bigoplus_{i=1}^N}w_i\mathcal{K}_i(\mathbf{x})$.
\end{theorem}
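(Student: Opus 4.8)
The plan is to first collapse every maximal set of fully-overlapping kernels into a single kernel using \cref{lemma-k-overlap}, and then evaluate the exact volume rendering integral \cref{equation-volume-rendering} on the resulting density field, which is piecewise (disjoint supports along the ray) and piecewise-constant in color.

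Step one is the reduction. I would partition the $N$ kernels into groups that share a common intersection interval; by hypothesis~(1), the kernels within a group additionally share a common color, so \cref{lemma-k-overlap} applies and each group may be replaced by one merged kernel whose footprint is the sum of the group's footprints, without changing either the value produced by \cref{equation-volume-splatting-ours-first} (the merged kernels occupy consecutive positions in the sort order, exactly as in the proof of Property~1) or the density field $\bigoplus_i w_i\mathcal{K}_i$ (the operator $\bigoplus$ reduces to ordinary addition on kernels with disjoint ray support, since footprints then add automatically). After this reduction the merged intervals are pairwise disjoint except possibly at isolated endpoints, so they admit a linear order $b_1\le a_2$, $b_2\le a_3,\dots$; this is precisely the claim that the sorting is well-defined, and it is the only place a conflict could have occurred, since by hypothesis~(1) any overlap of positive length must have been a full overlap, already removed by the merge.

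Step two is the integral, essentially the computation behind \cref{equation-volume-splatting-ours-first} derived in A.1 of the supplementary, applied to the reduced configuration. On $[a_i,b_i]$ we have $\sigma(\mathbf{x}(t))=w_i\mathcal{K}_i(t)$ and $\mathbf{c}(t)=\mathbf{c}_i$, while $\sigma$ vanishes between consecutive intervals; the residual endpoint overlaps form a null set and do not affect the integral, so this description holds almost everywhere. Writing the accumulated optical depth to a point $t\in[a_i,b_i]$ as $\sum_{j<i}\rho_j+\int_{a_i}^{t}w_i\mathcal{K}_i(t')\,dt'$, the transmittance factors as $\prod_{j<i}\exp(-\rho_j)$ times $\exp\!\big(-\int_{a_i}^{t}w_i\mathcal{K}_i(t')\,dt'\big)$. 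Hence the $i$-th interval contributes $\mathbf{c}_i\prod_{j<i}\exp(-\rho_j)\int_{a_i}^{b_i}w_i\mathcal{K}_i(t)\exp\!\big(-\int_{a_i}^{t}w_i\mathcal{K}_i(t')\,dt'\big)dt$, and recognizing the integrand as minus the $t$-derivative of $\exp\!\big(-\int_{a_i}^{t}w_i\mathcal{K}_i\,dt'\big)$ evaluates the inner integral to $1-\exp(-\rho_i)$. Summing over $i$ reproduces \cref{equation-volume-splatting-ours-first}. Hypothesis~(2) enters only here: it ensures that the $\rho_i$ appearing in this formula is the exact footprint $\int_{a_i}^{b_i}w_i\mathcal{K}_i(t)\,dt$, not an approximation, so the algorithm evaluates this expression exactly.

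I expect the main obstacle to be care rather than depth: making rigorous that \cref{lemma-k-overlap} leaves \cref{equation-volume-splatting-ours-first} invariant (which needs the merged kernels to occupy consecutive sort positions and $\bigoplus$ to coincide with ordinary addition on disjoint supports), and being explicit that the surviving endpoint overlaps are measure-zero so the piecewise formula for $\sigma$ holds almost everywhere — all the integral requires. The telescoping of the per-interval integral is the same computation already used for \cref{equation-volume-splatting-ours-first} and can be cited rather than redone.
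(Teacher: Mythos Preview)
Your proposal is correct and follows essentially the same route as the paper: the paper does not give a formal proof of \cref{theorem:unbiased} but treats it as an immediate consequence of the preceding discussion---namely, the derivation of \cref{equation-volume-splatting-ours-first} from \cref{equation-volume-rendering} under the non-overlapping assumption (A.1 of the supplementary) together with \cref{lemma-k-overlap} and the definition of $\bigoplus$. Your plan of first collapsing fully-overlapping groups via \cref{lemma-k-overlap} and then invoking the A.1 computation on the reduced, pairwise-disjoint configuration is exactly this argument spelled out, with the extra care about endpoint overlaps and the identity property of $\bigoplus$ on disjoint supports being welcome additions rather than departures.
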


\paragraph{Parameterize and Render Geometry Field with Gaussian Surfels.}
Our overall idea is illustrated in \cref{fig:pipeline}. We propose to parameterize the geometry field in \cref{equation-geometry-field} with kernels, then demonstrate how to convert it into the density field for volume splatting, and reveal under which conditions it is an exact rendering algorithm. We will also show that choosing Gaussian surfels as our kernels helps build a both efficient and almost exact rendering algorithm.

Specifically, we decompose the $F(\mathbf{x})$ as:
\begin{equation}
\label{equation-our-geometry-field-definition}
    F(\mathbf{x}) = {\bigoplus_{i=1}^N} w_i\mathcal{K}_i(\mathbf{x}) - c, 
\end{equation}
where $\bigoplus$ is the custom plus operator which we have just discussed and will elaborate on later, $\omega_i>0$, and $c$ is a manually chosen constant. Since we expect the kernels to cluster around the surface, the anistropic parameter $\alpha(\mathbf{x})$ is then set to $1$ as discussed in \cite{object_as_volumes}.

We first discuss the property of such a formulation of $F(\mathbf{x})$, when the represented geometry converges to the deterministic case. At the place where it is vacant, i.e., there are no kernels, $F(\mathbf{x}) = -c$, such that $v(\mathbf{x})\approx 1$. Therefore, $c$ is a manually chosen large positive value. At the place where kernels cluster, $F(\mathbf{x})$ is expected to be a large positive value, such that $o(\mathbf{x})\approx 1$. Therefore, when a ray travels in the space, the transmittance is expected to almost immediately fall from $1$ to $0$ at the place where it first intersects with kernels as shown in \cref{fig:pipeline} (c).

We still start from the assumption that there is no overlapping of any two intersections between kernels and the ray and then remove this assumption. From \cref{equation-density-ours-before}, the corresponding density field is defined as:
\begin{equation}
\label{equation-density-ours}
    \sigma(\mathbf{x}) = \frac{\psi(-F(\mathbf{x}))}{\Psi(-F(\mathbf{x}))} ||\nabla F(\mathbf{x})||\cdot|\boldsymbol{\omega}\cdot\mathbf{n}(\mathbf{x})|.
\end{equation}

\begin{figure}[tp]
    \centering
    \includegraphics[width=\linewidth]{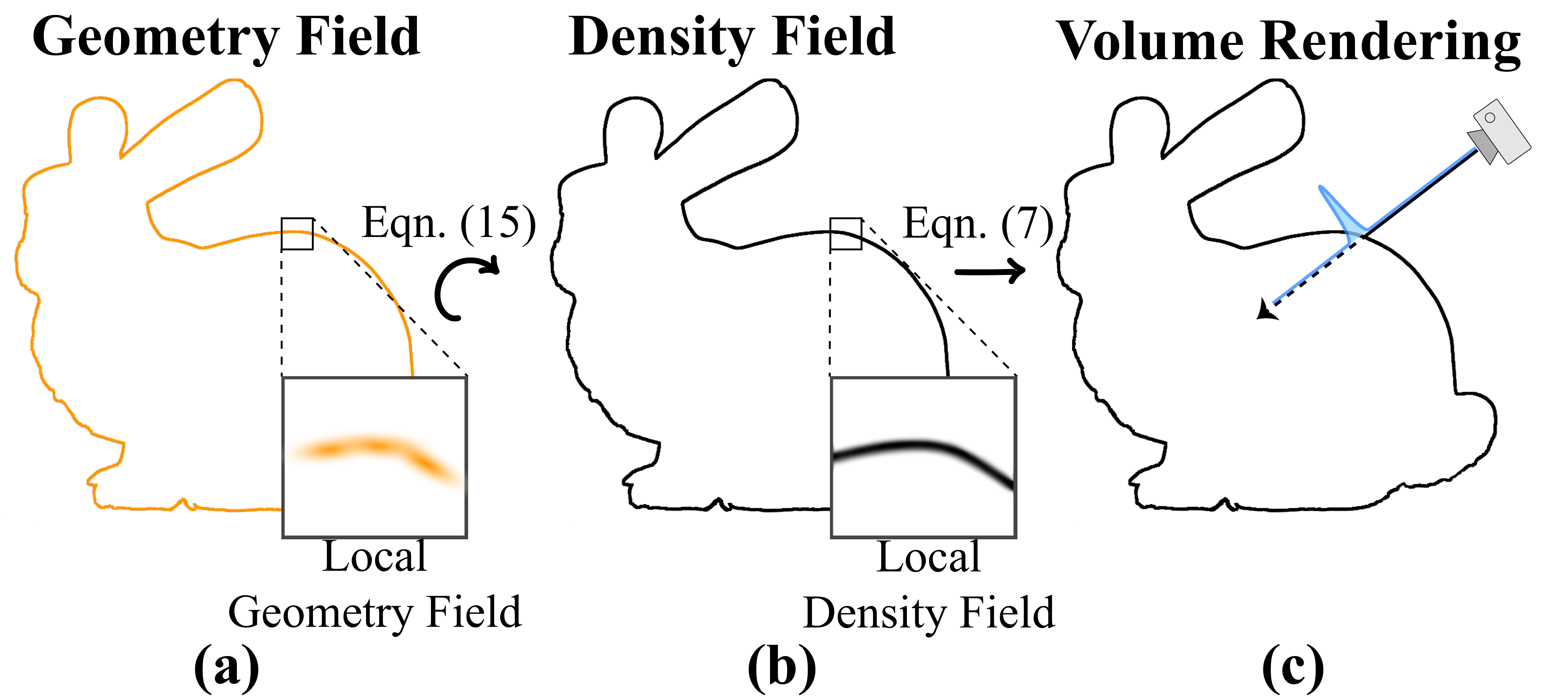}
    \caption{Overview of our algorithm. {\textbf{(a)} We first use 2D Gaussians to parameterize the geometry field, $F$. These kernels are expected to cluster around the surface. \textbf{(b)} We then convert the geometry field into the density field $\sigma$ and lastly, \textbf{(c)} we leverage our refined volume splatting algorithm for differentiable rendering.}}
    \label{fig:pipeline}
\end{figure}

As \cref{equation-volume-splatting-ours-first} indicates, given a ray, our goal is to evaluate the footprint function for the $i^\text{th}$ kernel:
\begin{equation}
\label{equation-rho-ours}
    \rho_i = \int_{a_i}^{b_i} \sigma(\mathbf{x}(t)) dt.
\end{equation}
We now make our choice of kernels. Aiming at exact rendering, we choose to use 2D Gaussians or Gaussian surfels \cite{2dgs, gaussian_surfel, Surfels} as our kernels, instead of 3D Gaussians \cite{3dgs}. In general, the intersections between Gaussian surfels and a ray are points. Thus, any two of the intersections are either not overlapping at all or fully coincide, satisfying the conditions of \cref{theorem:unbiased}. 

\begin{figure}[tp]
    \centering
    \includegraphics[width=0.9\linewidth]{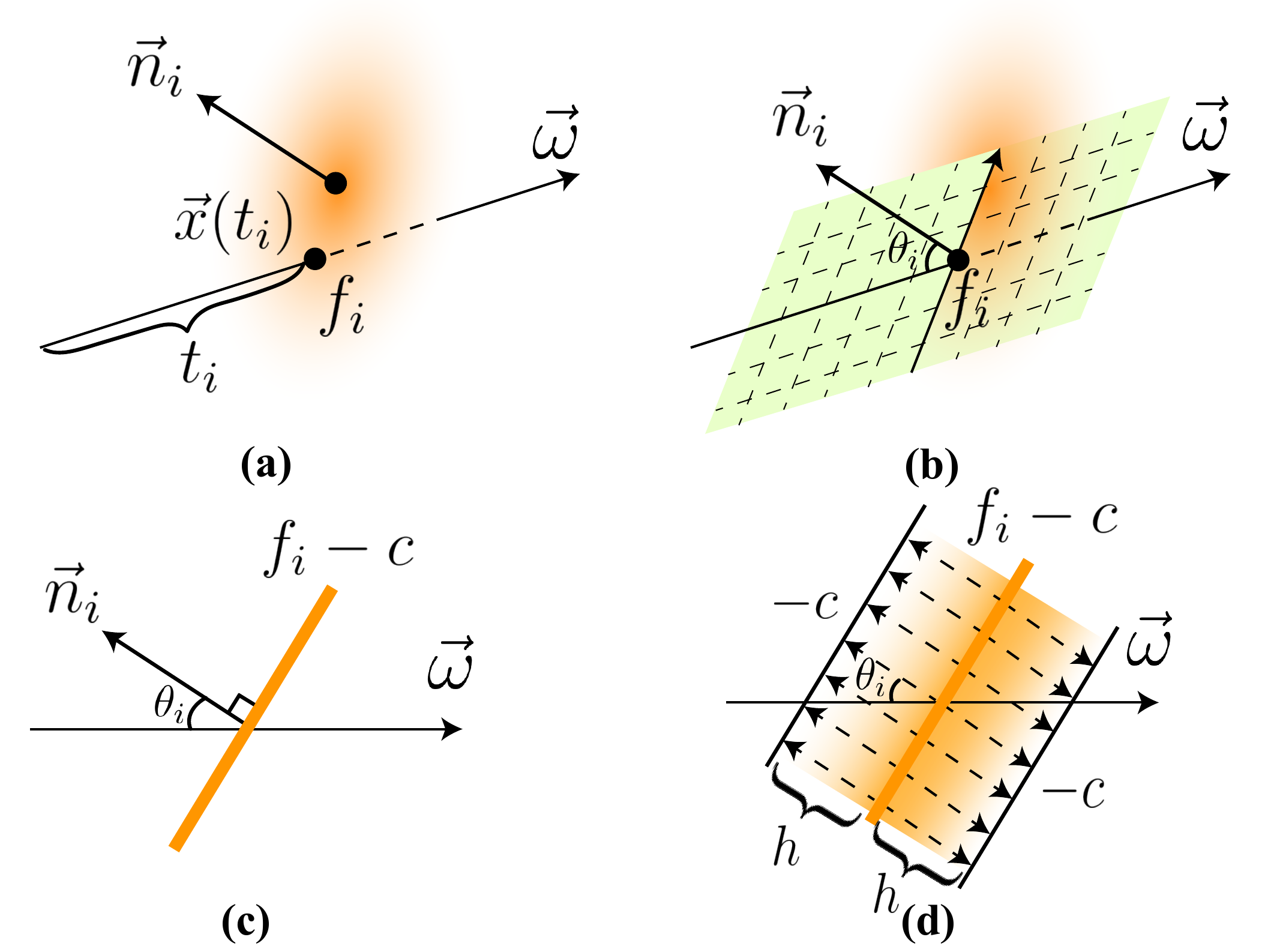}
    \caption{
    \textbf{(a)} An illustration of intersecting the $i^\text{th}$ Gaussian surfel, whose normal vector is denoted as $\mathbf{n}_i$ {that is perpendicular to the plane containing the surfel}, with a ray, whose direction is denoted as $\bm \omega$. The depth of the intersected point is denoted as $t_i$, and the intersected point is denoted as $\mathbf{x}(t_i)$. The value of the Gaussian surfel at $\mathbf{x}(t_i)$ is denoted as $f_i$.
    \textbf{(b)} We analyze the intersection by creating a 2D coordinate plane which is parallel to the $\mathbf{n}_i$ and $\mathbf{\omega}$, and passes through $\mathbf{x}(t_i)$. The angle between $\mathbf{n}_i$ and $\bm \omega$ is denoted as $\theta_i$. 
    \textbf{(c)} On the 2D coordinate plane, within an infinitesimally small range, the intersected line{, drawn as an orange line,} between the Gaussian surfel and the plane can be seen as having constant geometry field value $F = f_i-c$.
    \textbf{(d)} We expand the intersected line by linearly decaying it into $-c$ with length $h$ and direction $\mathbf{n}_i$.{ It gives the surfel a 3D width, which does not follow the Gaussian distribution. }As $h\to0$, it can be seen as equivalent to (c).
    }
    \label{fig:density-deduction}
\end{figure}

The $i^\text{th}$ 2D Gaussian kernel $\mathcal{G}(\mathbf{x})$ is defined as:
\begin{equation}
    \mathcal{G}_i(\mathbf{x}) = \exp\left(-\frac{1}{2}(\mathbf{x} - \mathbf{m}_i)^T \Sigma_i^{-1} (\mathbf{x} - \mathbf{m}_i)\right), 
\end{equation}
where $\mathbf{m}_i$ denotes its center and $\Sigma_i$ denotes its covariance matrix. As shown in \cref{fig:density-deduction} (a), given a ray $\mathbf{x}(t)=\mathbf{o}+t\boldsymbol{\omega}$, we are able to calculate the intersected depth $t_i$ and evaluate its kernel value at $\mathbf{x}(t_i)$ as $f_i=w_i\mathcal{G}_i(\mathbf{x}(t_i))$ \cite{2dgs}. We then have $F(\mathbf{x}(t))$ near $t_i$:
\begin{equation}
    F(\mathbf{x}(t)) = \begin{cases} 
      f_i-c & t = t_i \\
      -c & |t - t_i|<\epsilon, \text{where }\epsilon\to0.
   \end{cases}
\end{equation}
Such a function is not even a continuous function.
It seems that we cannot calculate \cref{equation-density-ours}. However, since we are interested in the footprint function instead of the density field, we can reach a closed form solution by slightly modifying this function, which makes the splatting algorithm a perfect fit. Specifically, we start analyzing the intersection by creating a plane as a 2D coordinate system in \cref{fig:density-deduction} (b). We look at an infinitesimally small range around $\mathbf{x}(t_i)$ as shown in \cref{fig:density-deduction} (c), such that the geometry field at the intersected line, drawn as an orange line, between the Gaussian surfel and the plane is constant, i.e., $f_i-c$. 
We enable the calculation of \cref{equation-density-ours} and \cref{equation-rho-ours} by {extruding} the intersected line %
with a linear decay. The geometry field value decays linearly from $f_i-c$ to $-c$ with the direction of $\mathbf{n}_i${, which is perpendicular to the plane containing the surfel,} and length of $h$, as shown in \cref{fig:density-deduction} (d). It is equivalent to the original case as $h\to0$. The $F(\mathbf{x}(t))$ near $t_i$ is modified as:
\begin{figure}[tp]
    \centering
    \includegraphics[width=0.9\linewidth]{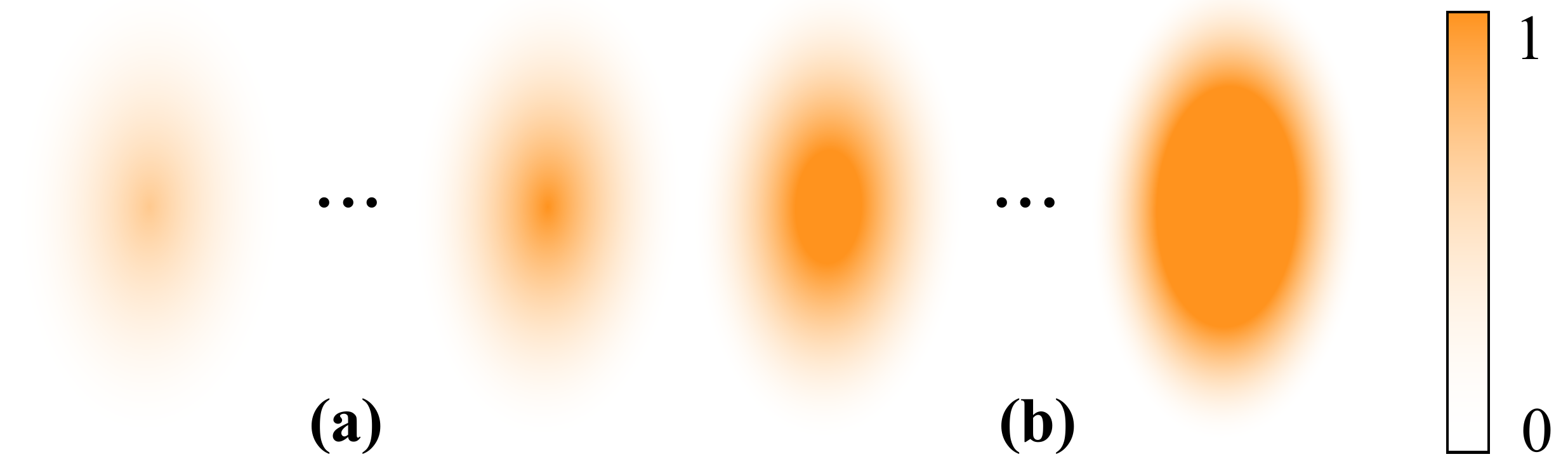}
    \caption{Illustration of opacity value on different Gaussian surfels. \textbf{(a)} 2DGS only allows the opacity to reach $1$ at the center. \textbf{(b)} Due to the enforced transformation from geometry field into density as in \cref{equation-rho-result-ours}, the opacity on a Gaussian surfel in our case does not follow a Gaussian distribution{, thus differentiating it from Gaussian splatting}. Furthermore, we also allow a larger central area to reach $1$, thus making the Gaussian surfel more opaque which benefits the surface reconstruction.}
    \label{fig:primitive_warehouse}
\end{figure}
\begin{equation}
    F(\mathbf{x}(t)) = f_i \times \left(1-\frac{|t-t_i|}{h/\cos\theta_i}\right) -c, \text{when }|t - t_i| < h/\cos\theta_i, 
\end{equation}
where $\cos\theta_i=|\boldsymbol{\omega}\cdot\mathbf{n}_i|$. The $||\nabla F(\mathbf{x})||$ near $t_i$ is then calculated as:
\begin{equation}
    ||\nabla F(\mathbf{x}(t)) || = f_i / h, \text{when }|t - t_i| < h/\cos\theta_i
\end{equation}
and it is parallel to $\mathbf{n}_i$. 
Therefore, we can evaluate \cref{equation-rho-ours} with \cref{equation-density-ours} as:
\begin{equation}
\label{equation-rho-result-ours}
\begin{aligned}
    \rho_i &= \int_{t_i - h/\cos\theta_i}^{t_i + h/\cos\theta_i} \frac{\psi(-F(\mathbf{x}(t)))}{\Psi(-F(\mathbf{x}(t)))} ||\nabla F(\mathbf{x}(t))||\cdot|\boldsymbol{\omega}\cdot\mathbf{n}(\mathbf{x}(t))| dt \\
    &= -2\ln\Psi(c-f_i). 
\end{aligned}
\end{equation}
Please find the derivation in A.2 of the supplementary. 
We define $g(u) := -2\ln\Psi(c-u)$ {to simplify the notation}. %
{As the expansion goes to $0$, i.e., without extrusion,} $\underset{{h\to0}}{\lim}\rho_i=g(f_i)$. 
Notice that the value range of $g(u)$ exceeds $[0, 1]$, making approximated volume splatting in \cref{equation-volume-splatting} not applicable and our refined volume splatting in \cref{equation-volume-splatting-ours-first} suitable. 
We can then use it for rendering in \cref{equation-volume-splatting-ours-first} as:
\begin{equation}
\label{equation-color-ours-final}
    \mathbf{C} = \sum_{i=1}^N \mathbf{c}_i (1-\exp(-g(f_i))) \prod_{j=1}^{i-1} \exp(-g(f_j)). 
\end{equation}

We now remove the non-overlapping assumption. 
We define $S(u) := -2\ln\Psi(c - u)$. Notice that $S(0) = 0$, and $S(u)$ is {a }monotonically increasing {function} {with respect to $u$}. Therefore, there exists a $S^{-1}(v), \forall v \geq 0$. We then define $\bigoplus(\cdot)$, such that $a\bigoplus b=S^{-1}(S(a)+S(b))$. Besides, when $m$ Gaussian surfels $\mathcal{G}_{j+1}, \mathcal{G}_{j+2}, ..., \mathcal{G}_{j+m}, j\in\mathbb{N}_+, m\in\mathbb{N}_+$ intersect the ray at the same place $\mathbf{x}(t_{j+1})$, from \cref{lemma-k-overlap}, \cref{equation-volume-splatting-ours-first} is equivalent to having a kernel whose footprint function equals to $\sum_{k=1}^{m}g(w_{j+k}\mathcal{G}_{j+k}(\mathbf{x}(t_{j+1})))$. 

Given that $$g({\bigoplus_{k=1}^{m}}w_{j+k}\mathcal{G}_{j+k}(\mathbf{x}(t_{j+1})))=\sum_{k=1}^{m}g(w_{j+k}\mathcal{G}_{j+k}(\mathbf{x}(t_{j+1}))), $$ it then corresponds to the geometry field value: 
\begin{equation}
\begin{aligned}
    F(\mathbf{x}(t_{j+1}))&={\bigoplus_{k=1}^{m}}w_{j+k}\mathcal{G}_{j+k}(\mathbf{x}(t_{j+1}))-c \\
    &={\bigoplus_{i=1}^{N}}w_{i}\mathcal{G}_{i}(\mathbf{x}(t_{j+1}))-c, 
\end{aligned}
\end{equation}
as other kernels are $0$ here. It follows our definition \cref{equation-our-geometry-field-definition}.

\begin{figure}[tp]
    \centering
    \includegraphics[width=\linewidth]{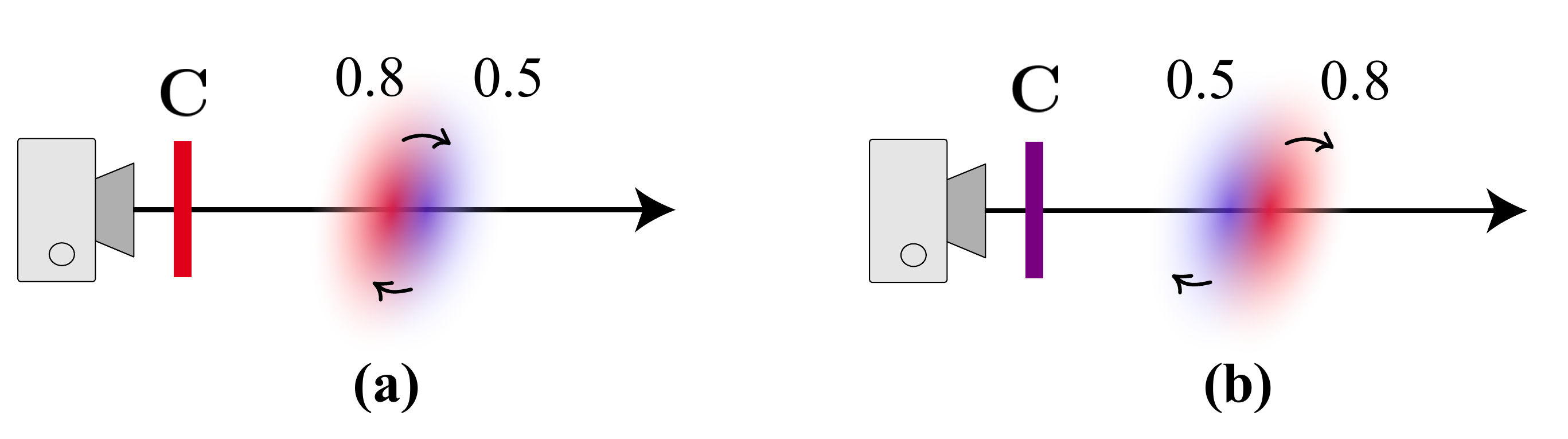}
    \caption{We illustrate the discontinuous change of {rendered} color {in the splatting algorithm} with respect to the positions of kernels. As shown in (a) and (b), when two kernels {continuously change their positions and are eventually swapped}, the ordering could be unstable and the rendered color $\mathbf{C}$ undergoes a discontinuous change when the ordering changes. The opacity is denoted on top of the kernel.}
    \label{fig:order}
\end{figure}

Another benefit that comes with using Gaussian surfels is that we can have the expected depth $D$ efficiently calculated with volume splatting as:
\begin{equation}
\label{equation-volume-splatting-ours-depth}
    D = \sum_{i=1}^N t_i (1-\exp(-g(f_i))) \prod_{j=1}^{i-1} \exp(-g(f_j)). 
\end{equation}
In contrast, with 3D Gaussians, the assumption for volume splatting that the property, i.e., depth in this case, which we want to render, is constant inside the kernel does not hold.

\paragraph{Discussion.} From \cref{theorem:unbiased}, {if} Gaussian surfels {have} the same color when they have the same intersected point, the rendering is then exact, but we make the only approximation to use the global sorting to approximate the per-ray sorting as in \cite{3dgs, 2dgs} for efficiency. 
We will discuss in \cref{section-method-continuous-loss} that this color constraint is also important for having a continuous loss landscape for optimization.

In practice, as in \cref{fig:primitive_warehouse}, the distributions of opacity on the Gaussian surfels do not follow the Gaussian distribution anymore, thus differentiating our algorithm from Gaussian splatting. We also allow larger areas on the Gaussian surfels to become fully opaque. These benefit the surface reconstruction as {shown} in \cref{sec:experiment}.

\subsection{Remedying Loss Landscape Defects}
\label{section-method-continuous-loss}
With the rendering algorithm we introduced before, during training, we apply the view synthesis loss as $\mathcal{L}_\text{rgb}$, depth distortion loss as $\mathcal{L}_d$ and depth-normal consistency loss as $\mathcal{L}_n$, following \cite{2dgs}. However, we use the expected depth defined in \cref{equation-volume-splatting-ours-depth} instead of the median depth in \cite{2dgs} to calculate $\mathcal{L}_n$. The final loss is defined as:
\begin{equation}
    \mathcal{L} = \mathcal{L}_\text{rgb} + \lambda_1 \mathcal{L}_d + \lambda_2 \mathcal{L}_n, 
\end{equation}
where $\lambda_1$ is set specific to the dataset, and $\lambda_2 = 0.05$.
$\mathcal{L}_d$ helps encourage the Gaussian surfels to cluster together, such that the represented stochastic geometry becomes deterministic, $\mathcal{L}_d$ helps smooth the geometry, and $\mathcal{L}_\text{rgb}$ is the main loss for driving the surface reconstruction. However, $\mathcal{L}_\text{rgb}$ is not a continuous function of properties, including center and covariance, of the Gaussian surfel. As shown in \cref{fig:order}, when two Gaussian surfels continuously change their intersected points and swap their position, the rendered color undergoes a discontinuous change. In the case of surface reconstruction, Gaussian surfels are encouraged to cluster together, which makes the ordering unstable and optimization harder.

\begin{figure}[tp]
    \centering
    \includegraphics[width=\linewidth]{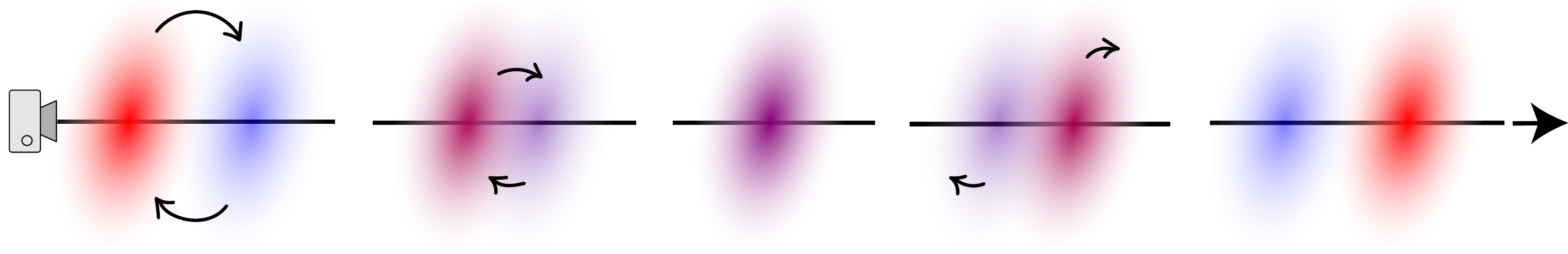}
    \caption{Demonstration of {continuously} changing the colors {of kernels} when two kernels gradually swap their positions.}
    \label{fig:color_merge}
\end{figure}

\begin{figure*}[tp]
    \centering
    \includegraphics[width=1.04\linewidth]{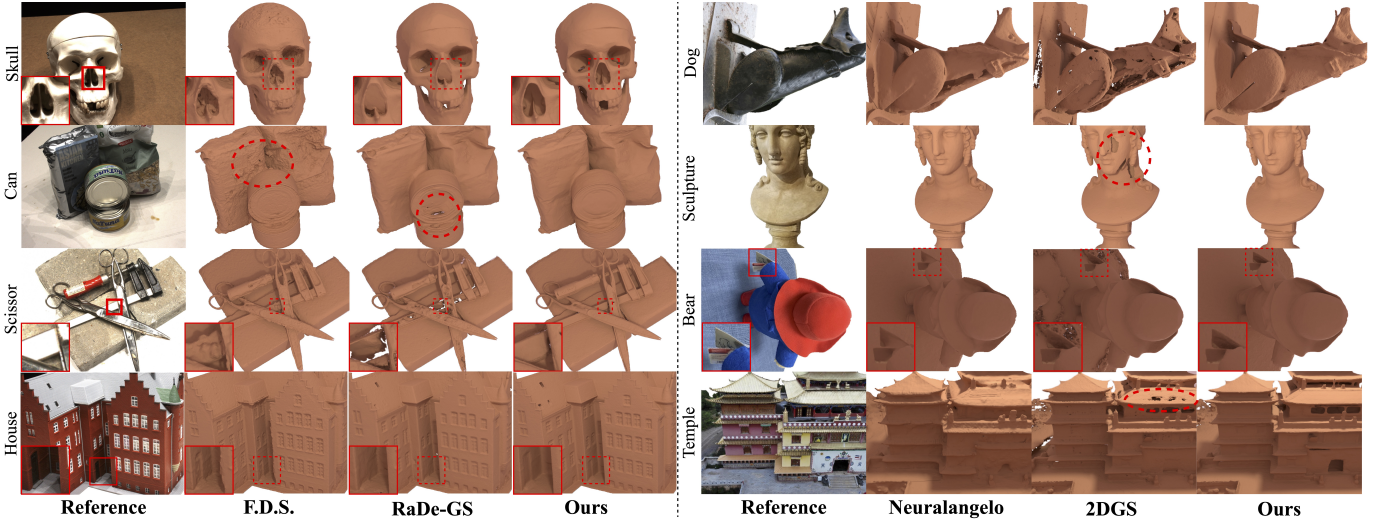}
    \caption{Qualitative comparison on DTU and BlendedMVS datasets. Non-obvious differences are highlighted by insets and red circles. ``F.D.S.'' denotes \citet{dipole}.}
    \label{fig:qualitative}
\end{figure*}

\begin{table*}[thb]
\centering
\footnotesize
\setlength{\tabcolsep}{5pt}
\begin{tabular}{r|ccccccccccccccc|cc}
\hline
\multicolumn{1}{c|}{\textbf{Methods}} & \textbf{24} & \textbf{37} & \textbf{40} & \textbf{55} & \textbf{63} & \textbf{65} & \textbf{69} & \textbf{83} & \textbf{97} & \textbf{105} & \textbf{106} & \textbf{110} & \textbf{114} & \textbf{118} & \textbf{122} & \textbf{Avg. $\downarrow$} & \textbf{Time} \\ \hline
NeuS2 \cite{neus2} & 
0.59 & 
0.79 & 
0.74 & 
0.40 & 
0.93 & 
0.74 & 
0.85 & 
1.37 & 
1.37 & 
0.82 & 
0.74 & 
0.86 & 
0.44 & 
0.60 & 
0.58 & 
0.79 &
\cellcolor{orange!25}$4$m
\\
N.A. \cite{neuralangelo} & 
\cellcolor{orange!25}0.39 & 
0.79 & 
0.38 & 
\cellcolor{red!25}0.30 & 
\cellcolor{yellow!25}0.82 & 
0.72 & 
1.91 & 
1.20 & 
1.94 & 
0.67 & 
\cellcolor{red!25}0.38 & 
1.15 & 
\cellcolor{orange!25}0.34 & 
0.84 & 
0.95 & 
0.85
& $>$ 12h \\
O.V. \cite{object_as_volumes} & 
1.53 & 
1.21 & 
1.06 & 
0.74 & 
\cellcolor{red!25}0.72 & 
1.14 & 
0.75 & 
\cellcolor{orange!25}0.98 & 
1.47 & 
0.86 & 
0.69 & 
\cellcolor{yellow!25}0.80 & 
0.56 & 
0.62 & 
0.73 & 
0.92 & 
$>$ 6h  \\
NeUDF \cite{neudf} & 
2.01 & 
0.75 & 
1.35 & 
0.57 & 
0.95 & 
0.75 & 
0.89 & 
\cellcolor{red!25}0.91 & 
1.29 & 
\cellcolor{red!25}0.56 & 
0.86 & 
1.41 & 
0.58 & 
0.77 & 
0.59 & 
0.95 & 
$>$ 6h \\
F.D.S. \cite{dipole} & 
0.44 & 
\cellcolor{yellow!25}0.66 & 
\cellcolor{orange!25}0.33 & 
\cellcolor{orange!25}0.32 & 
0.90 & 
\cellcolor{yellow!25}0.67 & 
\cellcolor{red!25}0.54 & 
1.28 & 
\cellcolor{orange!25}1.10 & 
0.70 & 
0.65 & 
\cellcolor{red!25}0.57 & 
0.41 & 
\cellcolor{red!25}0.40 & 
\cellcolor{orange!25}0.39 & 
\cellcolor{orange!25}0.62 & 
1h
\\
\hline
SuGaR \cite{guedon2023sugar} & 
1.47 & 1.33 & 1.13 & 0.61 & 2.25 & 1.71 & 1.15 & 1.63 & 1.62 & 1.07 & 0.79 & 2.45 & 0.98 & 0.88 & 0.79 & 1.33 & $42$m
\\
2DGS \cite{2dgs} & 
0.41 & 
0.67 & 
\cellcolor{orange!25}0.33 & 
0.36 & 
0.95 & 
0.81 & 
0.77 & 
1.24 & 
1.19 & 
0.68 & 
0.65 & 
1.28 & 
\cellcolor{yellow!25}0.35 & 
0.65 & 
0.46 & 
0.72 & 
\cellcolor{yellow!25}$6$m
\\ 
G.S. \cite{gaussian_surfel} &
0.64 &
0.85 &
0.58 &
0.43 &
0.99 &
1.18 &
0.90 &
1.14 &
\cellcolor{yellow!25}1.11 &
0.81 &
0.92 &
1.46 &
0.53 &
0.61 &
0.58 &
0.85 & 
\cellcolor{red!25}$3$m
\\
GOF \cite{GOF} &
0.46 &
0.68 &
0.39 &
0.38 &
1.15 &
0.82 &
0.70 &
1.14 &
1.21 &
0.65 &
0.68 &
1.04 &
0.48 &
0.68 &
0.49 &
0.73 &
$33$m
\\
RaDe-GS \cite{radegs} &
0.46 &
\cellcolor{yellow!25}0.66 &
\cellcolor{yellow!25}0.35 &
0.41 &
\cellcolor{orange!25}0.79 &
0.73 &
\cellcolor{yellow!25}0.67 &
1.14 &
1.18 &
\cellcolor{yellow!25}0.61 &
0.64 &
0.87 &
0.37 &
0.69 &
0.48 &
0.67 &
$10$m
\\ \hline
Ours (SH) & 
\cellcolor{red!25}0.38 &
\cellcolor{orange!25}0.63 &
\cellcolor{red!25}0.30 &
\cellcolor{yellow!25}0.35 &
\cellcolor{orange!25}0.79 &
\cellcolor{orange!25}0.63 &
\cellcolor{orange!25}0.65 &
1.10 &
1.21 &
0.62 &
\cellcolor{orange!25}0.48 &
1.18 &
\cellcolor{red!25}0.33 &
\cellcolor{orange!25}0.42 &
\cellcolor{red!25}0.38 &
\cellcolor{yellow!25}0.63 &
$10$m \\ 
Ours (Latent) & 
\cellcolor{yellow!25}0.40 &
\cellcolor{red!25}0.59 &
0.39 &
0.38 &
\cellcolor{red!25}0.72 &
\cellcolor{red!25}0.59 &
\cellcolor{orange!25}0.65 &
\cellcolor{yellow!25}1.08 &
\cellcolor{red!25}0.93 &
\cellcolor{orange!25}0.59 &
\cellcolor{yellow!25}0.50 &
\cellcolor{orange!25}0.67 &
\cellcolor{orange!25}0.34 &
\cellcolor{yellow!25}0.47 &
\cellcolor{yellow!25}0.40 &
\cellcolor{red!25}0.58 &
$11$m
 \\ 
\hline
\end{tabular}
\caption{Quantitative evalution on the DTU dataset based on the Chamfer Distance. 
Throughout the paper, the best metric is highlighted in red, the second best metric is highlighted in orange, and the third best metric is highlighted in yellow.
}
\label{table:dtu}
\end{table*}

We notice that, as shown in \cref{fig:color_merge}, if the color also continuously changes when the Gaussian surfel moves in such a way that whenever two Gaussian surfels intersect the ray at the exact same position, they have the same color, this discontinuity is solved. Therefore, given a ray which intersects with Gaussian surfels $\mathcal{G}_{1}, \mathcal{G}_{2}, ..., \mathcal{G}_{N}$, we propose to replace the color $c_i$ of $\mathcal{G}_i$ with $\widehat{c}_i$ as:
\begin{equation}
\label{equation-color-ours}
    \widehat{c}_i = \frac{\sum_{j=1}^N (1-\exp(-w_j)) \exp(-\tau|t_j - t_i|) c_j}{\sum_{j=1}^N (1-\exp(-w_j)) \exp(-\tau|t_j - t_i|)}, 
\end{equation}
where $\tau = 100$, and we blend the color based on the weight and distance.

\begin{theorem}
    With the color defined in \cref{equation-color-ours}, \cref{equation-color-ours-final} gives a continuous function of properties of Gaussian surfels.
\end{theorem}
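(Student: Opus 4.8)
The plan is to fix an arbitrary viewing ray and treat the rendered color $\mathbf{C}$ of \cref{equation-color-ours-final}, with every $\mathbf{c}_i$ replaced by the blended color $\widehat{c}_i$ of \cref{equation-color-ours}, as a function of the tuple of surfel parameters $\Theta=\{\mathbf{m}_i,\Sigma_i,w_i,\mathbf{c}_i\}_{i=1}^{N}$. Along the ray, the intersection depth $t_i=t_i(\Theta)$ and the kernel value $f_i=w_i\mathcal{G}_i(\mathbf{x}(t_i))$ depend continuously on $\Theta$ wherever the ray meets the $i$-th surfel transversally (non-grazing incidence), and $g(u)=-2\ln\Psi(c-u)$ and $\exp(\cdot)$ are continuous. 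The blended colors are continuous in $\Theta$ too: in the denominator of \cref{equation-color-ours} the $j=i$ summand equals $1-\exp(-w_i)>0$ since $w_i>0$, so $\widehat{c}_i$ is a ratio of continuous functions with nonvanishing denominator. Hence, for any fixed permutation $\pi$ of $\{1,\dots,N\}$, the expression $\Phi_\pi(\Theta)=\sum_{i}\widehat{c}_{\pi(i)}\bigl(1-\exp(-g(f_{\pi(i)}))\bigr)\prod_{j<i}\exp(-g(f_{\pi(j)}))$ is a globally continuous function of $\Theta$, and $\mathbf{C}$ coincides with $\Phi_\pi$ on the closed region $\overline{R_\pi}$ where $\pi$ sorts the depths $\{t_i\}$ along the ray.

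Since finitely many such $\overline{R_\pi}$ cover $\Theta$-space and overlap only on the ``collision set'' $\mathcal{D}=\{\Theta:\ t_i=t_j\ \text{for some}\ i\neq j\}$, by the pasting lemma it remains to show that on each overlap the corresponding $\Phi_\pi$ agree; equivalently, that transposing two surfels that share an intersection point leaves $\Phi$ unchanged. The key point — and the reason \cref{equation-color-ours} is the right choice — is that the index $i$ enters $\widehat{c}_i$ only through the distances $|t_k-t_i|$, so $t_i=t_j$ forces $\widehat{c}_i=\widehat{c}_j$ (the two weighted averages in \cref{equation-color-ours} become identical).

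Granting this, the computation is exactly that of Property~1. After relabeling, let the two coinciding surfels occupy consecutive positions $i,i+1$ with $t_i=t_{i+1}$, let $T=\prod_{l<i}\exp(-g(f_l))$ be the transmittance accumulated before them, and write $\widehat{c}$ for the common value $\widehat{c}_i=\widehat{c}_{i+1}$. Their joint contribution is $T\,\widehat{c}\,\bigl[(1-e^{-g(f_i)})+(1-e^{-g(f_{i+1})})e^{-g(f_i)}\bigr]=T\,\widehat{c}\,\bigl(1-e^{-g(f_i)-g(f_{i+1})}\bigr)$, symmetric in $i\leftrightarrow i+1$, and they pass the transmittance $T e^{-g(f_i)-g(f_{i+1})}$, also symmetric, to all later terms, while every other term is untouched; so the swap does not change $\Phi$. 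When a block of $m>2$ surfels meet the ray at one depth, any permutation of the block is a composition of such adjacent transpositions, and at that configuration all the block's blended colors coincide by the same reasoning (equivalently, \cref{lemma-k-overlap} applies), so $\Phi$ is invariant under the full block permutation group. This yields agreement of all the relevant $\Phi_\pi$ on $\mathcal{D}$ and proves $\mathbf{C}$ is continuous in $\Theta$.

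I expect the main obstacle to be the bookkeeping around the sort order: making precise that near a point of $\mathcal{D}$ the admissible orderings are exactly those respecting the strict inequalities among the non-colliding depths together with the $m!$ permutations of each colliding block, that each one-sided limit of $\mathbf{C}$ is realized by one such $\Phi_\pi$, and that they all coincide — cleanest by induction on the number of colliding pairs, removing one adjacent transposition at a time. A secondary technicality is the behavior when a surfel's intersection enters or leaves the ray (grazing incidence, or the finite cutoff of $\mathcal{G}_i$): there one must check that $g(f_i)$ varies continuously through the event so no spurious jump in $\mathbf{C}$ appears, which follows from using the soft Gaussian tail (or from noting this is a lower-dimensional locus and handling it separately).
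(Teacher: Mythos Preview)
Your proposal is correct and rests on the same key observation as the paper's proof: that $\widehat{c}_i$ depends on the index $i$ only through $t_i$, so $t_i=t_j$ forces $\widehat{c}_i=\widehat{c}_j$, which (via the Property~1 identity) makes the compositing order-independent at depth collisions. The paper's proof is in fact a two-sentence sketch stating exactly that observation; you have supplied the surrounding scaffolding (fixed-permutation continuity, the pasting-lemma reduction to collisions, block transpositions, the grazing-incidence caveat) that the paper leaves implicit.
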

\begin{proof}
    $\widehat{c}_i$ is a continuous function of $t_1, t_2, ..., t_N$. And when there exists an $i'\in\mathbb{N}$, such that $t_i=t_{i'}$, $\widehat{c}_i=\widehat{c}_{i'}$.
\end{proof}

With the color defined in \cref{equation-color-ours}, we also reach an almost exact rendering algorithm as discussed in \cref{section-method-geo-field-splatting}. 
However, it is computationally inefficient to blend the color per ray. We instead propose an efficient approximation that blends the color in the $\mathbb{R}^3$ space. Specifically, we have:
\begin{equation}
\label{equation-color-ours-approx}
    \widehat{c}_i = \frac{\sum_{j=1}^N (1-\exp(-w_j)) \exp(-\tau||\mathbf{m}_i - \mathbf{m}_j||_2) c_j}{\sum_{j=1}^N (1-\exp(-w_j)) \exp(-\tau||\mathbf{m}_i - \mathbf{m}_j||_2)}, 
\end{equation}
which can be efficiently implemented as propagating the color before rendering.

\subsection{Improve Color Representation}
\label{section-method-color-representation}
The method described above %
significantly improves surface reconstruction quality using spherical harmonics as the color representation. However, we find it struggles to reconstruct the specular surfaces, as observed in \cite{liu2023nero, refnerf, gaussianshader, dipole, 3dgsdr}. %

We instead assign a latent $\bm l_i \in\mathbb{R}^{32}$ to the $i^\text{th}$ Gaussian surfel and create a shallow MLP $\Phi$, such that:
\begin{equation}
    \mathbf{c}_i = \Phi(\bm l_i, \text{SE}(\bm \omega), \text{SE}(\bm \omega_o)), 
\end{equation}
where $\text{SE}(\cdot)$ denotes the spherical harmonics encoding \cite{sh}, $\bm \omega$ denotes the ray direction, and $\bm \omega_o$ denotes the reflected ray direction with respect to the local normal vector, i.e., $\bm \omega = (\mathbf{m}_i - \mathbf{o}) / || \mathbf{m}_i - \mathbf{o} ||_2$, and $\bm \omega_o = 2 (\mathbf{n}_i \cdot \bm \omega)\mathbf{n}_i - \bm \omega$. 
We find it beneficial to input both $\bm \omega$ and $\bm \omega_o$ for dealing with both diffuse and specular surfaces{, as in \cite{wang2025unisdf, verbin2024nerf}}.
\begin{table}[t]
\centering
\footnotesize
\setlength{\tabcolsep}{3pt}
\begin{tabular}{r|cccccccc|c}
\hline
\multicolumn{1}{c|}{\textbf{Methods}} & \textbf{EvaUni} & \textbf{Temple} & \textbf{Excava} & \textbf{Museum} & \textbf{Avg. $\downarrow$} \\
\hline
F.D.S. \cite{dipole} &
N/A & 
\cellcolor{red!25}1.87 & 
1.08 & 
\cellcolor{red!25}1.92 & 
\cellcolor{orange!25}1.62
\\
N.A. \cite{neuralangelo} & 
\cellcolor{orange!25}1.49 &
2.20 & 
0.88 & 
\cellcolor{yellow!25}2.07 & 
1.66
\\
\hline
2DGS \cite{2dgs} & 
1.56 & 
2.10 & 
0.98 & 
2.21 & 
1.71
\\
G.S. \cite{gaussian_surfel} & 
2.18 & 
2.26 & 
N/A & 
2.88 & 
2.44
\\
GOF \cite{GOF} & 
\cellcolor{yellow!25}1.55 & 
\cellcolor{yellow!25}1.99 & 
\cellcolor{orange!25}0.81 & 
2.23 & 
\cellcolor{yellow!25}1.65
\\
RaDe-GS \cite{radegs} & 
1.67 & 
2.01 & 
0.88 & 
2.19 & 
1.69
\\ 
\hline
Ours (SH) & 
\cellcolor{red!25}1.48 & 
\cellcolor{orange!25}1.89 & 
\cellcolor{red!25}0.72 & 
\cellcolor{orange!25}1.98 & 
\cellcolor{red!25}1.52
\\
Ours (Latent) & 
1.58 & 
2.13 & 
\cellcolor{yellow!25}0.87 & 
2.26 & 
1.71
\\
\hline
\end{tabular}
\caption{Quantitative comparison on the BlendedMVS dataset for scene-level cases based on the Chamfer Distance ($\times10^{-1}$). ``N/A'' denotes that the method fails to converge. }
\label{table:bmvs-scene}
\end{table}

\section{Results}
\label{sec:experiment}
We discuss implementation details and evaluation here. For detailed proof, further implementation details, evaluation results, and discussions, please refer to the supplementary.

\paragraph{Implementation Details.} We base our implementations and hyper-parameters on \cite{2dgs} and use \cite{tiny-cuda-nn} for implementing the MLP. We replace the original densification strategy \cite{2dgs, 3dgs} with \cite{ye2024absgs}. We use the TSDF fusion \cite{open3d} to extract the mesh from rendered depth maps. The proposed color blending over all kernels is approximated through closest-k-points \cite{ravi2020pytorch3d}, where $k=10$. All the experiments are conducted on an NVIDIA 6000 Ada.

\begin{table}[t]
\centering
\footnotesize
\setlength{\tabcolsep}{3pt}
\begin{tabular}{r|cccccccc|c}
\hline
\multicolumn{1}{c|}{\textbf{Methods}} & \textbf{Bea} & \textbf{Clo} & \textbf{Dog} & \textbf{Dur} & \textbf{Jad} & \textbf{Man} & \textbf{Scu} & \textbf{Sto} & \textbf{Avg. $\downarrow$} \\
\hline
F.D.S. \cite{dipole} & 
\cellcolor{orange!25}0.33 & 
\cellcolor{red!25}0.10 & 
\cellcolor{red!25}0.21 & 
\cellcolor{orange!25}2.63 & 
0.19 & 
0.80 & 
\cellcolor{yellow!25}0.44 & 
0.91 & 
\cellcolor{yellow!25}0.70
\\
N.A. \cite{neuralangelo} & 
\cellcolor{red!25}0.32 & 
\cellcolor{orange!25}0.12 & 
0.31 & 
3.11 & 
\cellcolor{yellow!25}0.15 & 
0.80 & 
\cellcolor{orange!25}0.43 & 
\cellcolor{orange!25}0.77 &
0.75
\\
\hline
2DGS \cite{2dgs} & 
1.17 & 
0.46 & 
0.45 & 
2.87 & 
\cellcolor{yellow!25}0.15 & 
0.53 & 
0.59 & 
0.90 & 
0.89
\\
G.S. \cite{gaussian_surfel} & 
0.57 & 
0.35 & 
0.27 & 
6.62 & 
0.16 & 
0.53 & 
0.76 & 
1.45 & 
1.34
\\
GOF \cite{GOF} & 
0.59 & 
0.26 & 
0.34 & 
\cellcolor{yellow!25}2.72 & 
\cellcolor{red!25}0.12 & 
0.53 & 
0.48 & 
1.03 & 
0.76
\\
RaDe-GS \cite{radegs} & 
\cellcolor{yellow!25}0.51 & 
0.25 & 
\cellcolor{orange!25}0.24 & 
2.83 & 
\cellcolor{red!25}0.12 & 
\cellcolor{yellow!25}0.52 & 
\cellcolor{orange!25}0.43 & 
0.84 & 
0.72
\\ 
\hline
Ours (SH) & 
0.55 & 
0.17 & 
0.40 & 
\cellcolor{orange!25}2.63 & 
\cellcolor{orange!25}0.14 & 
\cellcolor{red!25}0.43 & 
\cellcolor{red!25}0.36 & 
\cellcolor{red!25}0.71 & 
\cellcolor{orange!25}0.67
\\
Ours (Latent) & 
0.60 & 
\cellcolor{yellow!25}0.13 & 
\cellcolor{yellow!25}0.25 & 
\cellcolor{red!25}2.58 & 
\cellcolor{orange!25}0.14 & 
\cellcolor{orange!25}0.44 & 
\cellcolor{red!25}0.36 & 
\cellcolor{yellow!25}0.80 & 
\cellcolor{red!25}0.66
\\
\hline
\end{tabular}
\caption{Quantitative comparison on the BlendedMVS dataset for object-centric cases based on the Chamfer Distance ($\times10^{-2}$). }
\label{table:bmvs}
\end{table}

\begin{figure}[t]
    \centering
    \includegraphics[width=\linewidth]{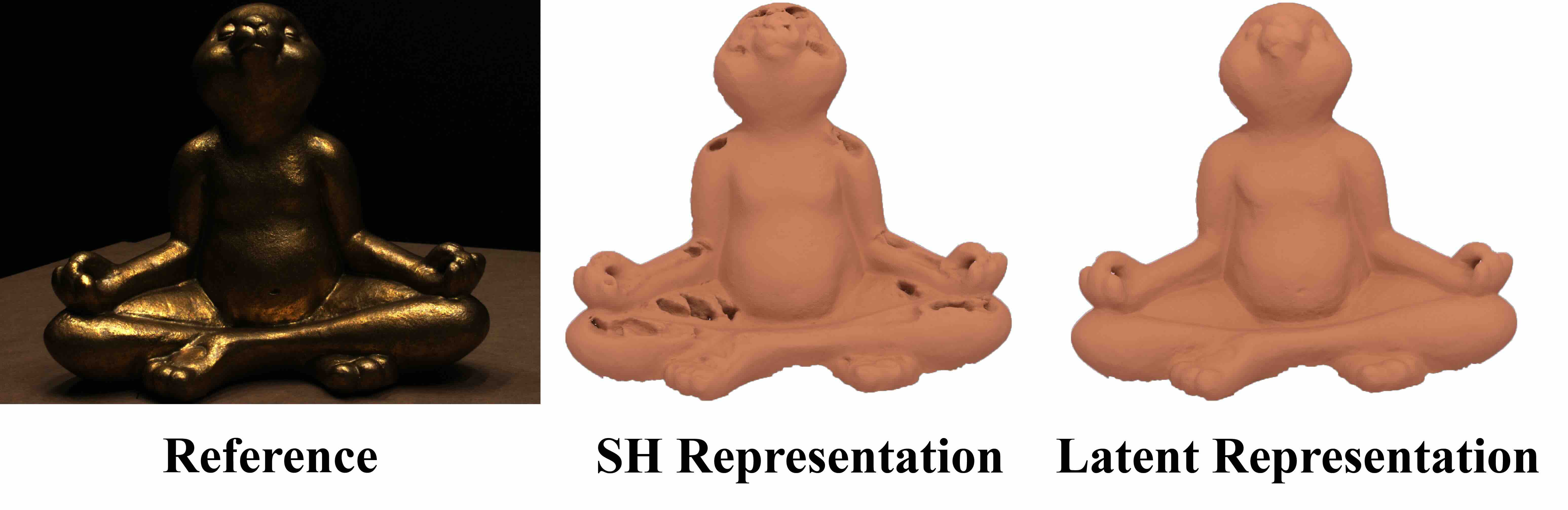}
    \caption{{We demonstrate the differences between SH and latent color representations on the {``scan110''} sample of the DTU dataset. When reconstructing specular surfaces (left), the SH representation (middle) leads to visible holes in the recovered mesh; these artifacts are resolved when using a latent representation (right).}}
    \label{fig:ablation_color_representation}
\end{figure}

\subsection{Qualitative Evaluation}
As shown in \cref{fig:qualitative}, we compare with \citet{dipole}, Neuralangelo \cite{neuralangelo}, RaDe-GS \cite{radegs}, and 2DGS \cite{2dgs} as representative methods on DTU \cite{dtu}, and BlendedMVS \cite{yao2020blendedmvs}. Our method better captures geometric details (e.g., nose of the skull, door of the house), while handling specular surfaces well (e.g., reflective cans and scissors), compared to \citet{dipole} and RaDe-GS. Our method also ensures smooth geometry without holes (e.g., dog, sculpture and temple). Neuralangelo fails to properly represent open-boundary objects (e.g., card held by the bear).
{We further prepare a preliminary experiment on highly specular surfaces in C.2 of the supplementary.}

\subsection{Quantitative Evaluation}
We %
evaluate on the datasets with ground-truth geometries and their corresponding camera parameters. 
We compare with methods, including NeuS2 \cite{neus2}, Neuralangelo \cite{neuralangelo}, \citet{object_as_volumes}, NeUDF \cite{neudf} as representative neural methods, SuGaR \cite{guedon2023sugar}, 2DGS \cite{2dgs}, \citet{gaussian_surfel}, GOF \cite{GOF}, RaDe-GS \cite{radegs} as representative splatting methods, and \citet{dipole} on the DTU \cite{dtu} dataset in \cref{table:dtu}. 
Besides, we compare with \citet{dipole}, Neuralangelo, 2DGS, \citet{gaussian_surfel}, GOF and RaDe-GS on the BlendedMVS \cite{yao2020blendedmvs} dataset for scene-level cases in \cref{table:bmvs-scene} and object-centric cases in \cref{table:bmvs}. 
We unify the evaluation protocol and do not apply masking loss during training, nor %
iterative closest point during evaluation. %

It is noteworthy that our method has the best averaged metrics. We achieve the best or second best results across almost all diffuse scenes while using the SH representation. For specular cases (i.e., ``scan97'', ``scan110'' of DTU, and ``Dog'', ``Clock'' of BlendedMVS), our method with improved color representation significantly improves the reconstruction quality and achieves one of the top three metrics. 
We are also significantly faster than other competing methods, including, Neuralangelo and \citet{dipole}.
The additional computational cost compared to 2DGS comes from our color blending which relies on the closest-k-points.

\subsection{Ablation Study}
We evaluate our proposed components on the DTU dataset \cite{dtu} with both SH representation and latent representation for colors, and report the Chamfer Distance (denoted as CD) in \cref{table:dtu-ablation}. 
Specifically, we replace the proposed geometry field-based splatting algorithm with the original 2DGS splatting algorithm, which leads to a severe drop of geometry quality. 
Besides, we also remove the proposed color propagation used for remedying loss landscape defects, which also leads to a drop in geometry quality. 
Furthermore, we study the effects of the only approximation we make, i.e., using global sorting to approximate the per-ray sorting. We implement a per-ray priority queue to sort the surfels based on the depths they intersect with the ray. Even though the per-ray sorting further improves the geometry quality, it consumes much more time ($3-4\times$) and memory. We then do not use per-ray sorting by default and in other experiments.

Moreover, we also demonstrate the benefits of using the latent representation for colors on the specular surfaces in \cref{fig:ablation_color_representation}. With SH representation, there are holes on the specular surface, while latent representation is free of this problem.

\begin{table}[t]
\centering
\footnotesize
\begin{tabular}{l|c|c}
\hline
\textbf{Methods}           & \textbf{Ours (SH)} & \textbf{Ours (Latent)} \\ \hline
Full Model & \cellcolor{orange!25}{0.63} & \cellcolor{orange!25}{0.58} \\
\quad- Geometry Field Splatting & 0.70 & 0.72       \\
\quad- Remedy Loss Landscape Defects   & \cellcolor{yellow!25}{0.67} & \cellcolor{yellow!25}{0.59}       \\ 
\quad+ Per-Ray Sorting & \cellcolor{red!25}{0.61} & \cellcolor{red!25}{0.57} \\
\hline
\end{tabular}
\caption{Ablation study for the proposed components on the DTU dataset based on the Chamfer {Distance}. %
}
\label{table:dtu-ablation}
\end{table}

\section{Conclusion}
\label{sec:conclusion}
In this paper, we propose to first define a stochastic geometry field in the space with Gaussian surfels, then convert it into the density field, and render it with an efficient and differentiable refined splatting algorithm for accurate geometry reconstruction. We also identify the discontinuity of view synthesis loss and propose an efficient remedy to address it such that the represented stochastic geometry can converge to the deterministic geometry well through the optimization.
Our method is not without limitations. %
For example, our method cannot handle transparent or semi-transparent objects due to the opaque assumption, or fuzzy objects well due to the smoothness constraint. 

In conclusion, we show that it is possible to use Gaussian surfels to accurately define the geometry and then perform efficient and differentiable almost exact rendering for geometry reconstruction. We achieve significant improvement over the geometry quality compared to other baselines, including both neural and splatting-based methods. %

\paragraph{Acknowledgements} This work is supported by the Intelligence Advanced Research Projects Activity (IARPA) via Department of Interior/ Interior Business Center (DOI/IBC) contract number 140D0423C0076. The U.S. Government is authorized to reproduce and distribute reprints for Governmental purposes notwithstanding any copyright annotation thereon. Disclaimer: The views and conclusions contained herein are those of the authors and should not be interpreted as necessarily representing the official policies or endorsements, either expressed or implied, of IARPA, DOI/IBC, or the U.S. Government. 
We also acknowledge ONR grant N00014-23-1-2526, and NSF grants 2100237 and 2120019 for the Nautilus cluster, gifts from Adobe, Google, Qualcomm and Rembrand, the Ronald L. Graham Chair and the UC San Diego Center for Visual Computing. We thank Yash Belhe for discussions, proofreading and feedback. We also thank the NVIDIA for GPU gifts.

{\small
\bibliographystyle{ieeenat_fullname}
\bibliography{11_references}

\begin{thebibliography}{74}
\providecommand{\natexlab}[1]{#1}
\providecommand{\url}[1]{\texttt{#1}}
\expandafter\ifx\csname urlstyle\endcsname\relax
  \providecommand{\doi}[1]{doi: #1}\else
  \providecommand{\doi}{doi: \begingroup \urlstyle{rm}\Url}\fi

\bibitem[Barron et~al.(2022)Barron, Mildenhall, Verbin, Srinivasan, and
  Hedman]{mipnerf360}
Jonathan~T Barron, Ben Mildenhall, Dor Verbin, Pratul~P Srinivasan, and Peter
  Hedman.
\newblock Mip-nerf 360: Unbounded anti-aliased neural radiance fields.
\newblock In \emph{Proceedings of the IEEE/CVF conference on computer vision
  and pattern recognition}, pages 5470--5479, 2022.

\bibitem[Bleyer et~al.(2011)Bleyer, Rhemann, and Rother]{Bleyer2011PatchMatchS}
Michael Bleyer, Christoph Rhemann, and Carsten Rother.
\newblock Patchmatch stereo - stereo matching with slanted support windows.
\newblock In \emph{British Machine Vision Conference}, 2011.

\bibitem[Botsch et~al.(2005)Botsch, Hornung, Zwicker, and
  Kobbelt]{10.5555/2386366.2386369}
Mario Botsch, Alexander Hornung, Matthias Zwicker, and Leif Kobbelt.
\newblock High-quality surface splatting on today's gpus.
\newblock In \emph{Proceedings of the Second Eurographics / IEEE VGTC
  Conference on Point-Based Graphics}, page 17–24, Goslar, DEU, 2005.
  Eurographics Association.

\bibitem[Cai et~al.(2023)Cai, Ji, Yan, and Xu]{riav}
Changjiang Cai, Pan Ji, Qingan Yan, and Yi Xu.
\newblock Riav-mvs: Recurrent-indexing an asymmetric volume for multi-view
  stereo.
\newblock In \emph{Proceedings of the IEEE/CVF conference on computer vision
  and pattern recognition}, pages 919--928, 2023.

\bibitem[Chen et~al.(2024{\natexlab{a}})Chen, Li, Ye, Wang, Xie, Zhai, Wang,
  Liu, Bao, and Zhang]{pgsr}
Danpeng Chen, Hai Li, Weicai Ye, Yifan Wang, Weijian Xie, Shangjin Zhai, Nan
  Wang, Haomin Liu, Hujun Bao, and Guofeng Zhang.
\newblock Pgsr: Planar-based gaussian splatting for efficient and high-fidelity
  surface reconstruction.
\newblock \emph{arXiv preprint arXiv:2406.06521}, 2024{\natexlab{a}}.

\bibitem[Chen et~al.(2024{\natexlab{b}})Chen, Oberson, Feldmann, Schreer,
  Hilsmann, and Eisert]{surfel-human-1}
Decai Chen, Brianne Oberson, Ingo Feldmann, Oliver Schreer, Anna Hilsmann, and
  Peter Eisert.
\newblock Adaptive and temporally consistent gaussian surfels for multi-view
  dynamic reconstruction.
\newblock \emph{arXiv preprint arXiv:2411.06602}, 2024{\natexlab{b}}.

\bibitem[Chen et~al.(2024{\natexlab{c}})Chen, Miller, and Gkioulekas]{dipole}
Hanyu Chen, Bailey Miller, and Ioannis Gkioulekas.
\newblock 3d reconstruction with fast dipole sums.
\newblock \emph{arXiv preprint arXiv:2405.16788}, 2024{\natexlab{c}}.

\bibitem[Curless and Levoy(2023)]{10.1145/3596711.3596726}
Brian Curless and Marc Levoy.
\newblock \emph{A Volumetric Method for Building Complex Models from Range
  Images}.
\newblock Association for Computing Machinery, New York, NY, USA, 1 edition,
  2023.

\bibitem[Dai et~al.(2024)Dai, Xu, Xie, Liu, Wang, and Xu]{gaussian_surfel}
Pinxuan Dai, Jiamin Xu, Wenxiang Xie, Xinguo Liu, Huamin Wang, and Weiwei Xu.
\newblock High-quality surface reconstruction using gaussian surfels.
\newblock In \emph{ACM SIGGRAPH 2024 Conference Papers}, pages 1--11, 2024.

\bibitem[Darmon et~al.(2022)Darmon, Bascle, Devaux, Monasse, and
  Aubry]{neuralwarp}
Fran{\c{c}}ois Darmon, B{\'e}n{\'e}dicte Bascle, Jean-Cl{\'e}ment Devaux,
  Pascal Monasse, and Mathieu Aubry.
\newblock Improving neural implicit surfaces geometry with patch warping.
\newblock In \emph{Proceedings of the IEEE/CVF Conference on Computer Vision
  and Pattern Recognition}, pages 6260--6269, 2022.

\bibitem[Ding et~al.(2022{\natexlab{a}})Ding, Yuan, Zhu, Zhang, Liu, Wang, and
  Liu]{ding2022transmvsnet}
Yikang Ding, Wentao Yuan, Qingtian Zhu, Haotian Zhang, Xiangyue Liu, Yuanjiang
  Wang, and Xiao Liu.
\newblock Transmvsnet: Global context-aware multi-view stereo network with
  transformers.
\newblock In \emph{Proceedings of the IEEE/CVF Conference on Computer Vision
  and Pattern Recognition}, pages 8585--8594, 2022{\natexlab{a}}.

\bibitem[Ding et~al.(2022{\natexlab{b}})Ding, Zhu, Liu, Yuan, Zhang, and
  Zhang]{kd-mvs}
Yikang Ding, Qingtian Zhu, Xiangyue Liu, Wentao Yuan, Haotian Zhang, and Chi
  Zhang.
\newblock Kd-mvs: Knowledge distillation based self-supervised learning for
  multi-view stereo.
\newblock In \emph{European conference on computer vision}, pages 630--646.
  Springer, 2022{\natexlab{b}}.

\bibitem[Dong et~al.(2024)Dong, Xu, Gao, Bao, Xu, and Lau]{surfel-human-2}
Zheng Dong, Ke Xu, Yaoan Gao, Hujun Bao, Weiwei Xu, and Rynson W.~H. Lau.
\newblock Gaussian surfel splatting for live human performance capture.
\newblock \emph{ACM Trans. Graph.}, 43\penalty0 (6), 2024.

\bibitem[Fu et~al.(2022)Fu, Xu, Ong, and Tao]{geoneus}
Qiancheng Fu, Qingshan Xu, Yew~Soon Ong, and Wenbing Tao.
\newblock Geo-neus: Geometry-consistent neural implicit surfaces learning for
  multi-view reconstruction.
\newblock \emph{Advances in Neural Information Processing Systems},
  35:\penalty0 3403--3416, 2022.

\bibitem[Furukawa and Ponce(2010)]{5226635}
Yasutaka Furukawa and Jean Ponce.
\newblock Accurate, dense, and robust multiview stereopsis.
\newblock \emph{IEEE Transactions on Pattern Analysis and Machine
  Intelligence}, 32\penalty0 (8):\penalty0 1362--1376, 2010.

\bibitem[Ge et~al.(2023)Ge, Hu, Zhao, Liu, and Chen]{ge2023ref}
Wenhang Ge, Tao Hu, Haoyu Zhao, Shu Liu, and Ying-Cong Chen.
\newblock Ref-neus: Ambiguity-reduced neural implicit surface learning for
  multi-view reconstruction with reflection.
\newblock In \emph{Proceedings of the IEEE/CVF International Conference on
  Computer Vision}, pages 4251--4260, 2023.

\bibitem[Goesele et~al.(2007)Goesele, Snavely, Curless, Hoppe, and
  Seitz]{multi-view-stereo}
Michael Goesele, Noah Snavely, Brian Curless, Hugues Hoppe, and Steven~M.
  Seitz.
\newblock Multi-view stereo for community photo collections.
\newblock In \emph{2007 IEEE 11th International Conference on Computer Vision},
  pages 1--8, 2007.

\bibitem[Gropp et~al.(2020)Gropp, Yariv, Haim, Atzmon, and Lipman]{eik_loss}
Amos Gropp, Lior Yariv, Niv Haim, Matan Atzmon, and Yaron Lipman.
\newblock Implicit geometric regularization for learning shapes.
\newblock \emph{arXiv preprint arXiv:2002.10099}, 2020.

\bibitem[Gu et~al.(2020)Gu, Fan, Zhu, Dai, Tan, and Tan]{gu2020cascade}
Xiaodong Gu, Zhiwen Fan, Siyu Zhu, Zuozhuo Dai, Feitong Tan, and Ping Tan.
\newblock Cascade cost volume for high-resolution multi-view stereo and stereo
  matching.
\newblock In \emph{Proceedings of the IEEE/CVF Conference on Computer Vision
  and Pattern Recognition}, pages 2495--2504, 2020.

\bibitem[Gu{\'e}don and Lepetit(2024)]{guedon2023sugar}
Antoine Gu{\'e}don and Vincent Lepetit.
\newblock Sugar: Surface-aligned gaussian splatting for efficient 3d mesh
  reconstruction and high-quality mesh rendering.
\newblock \emph{CVPR}, 2024.

\bibitem[Hirschmuller(2008)]{4359315}
Heiko Hirschmuller.
\newblock Stereo processing by semiglobal matching and mutual information.
\newblock \emph{IEEE Transactions on Pattern Analysis and Machine
  Intelligence}, 30\penalty0 (2):\penalty0 328--341, 2008.

\bibitem[Huang et~al.(2024)Huang, Yu, Chen, Geiger, and Gao]{2dgs}
Binbin Huang, Zehao Yu, Anpei Chen, Andreas Geiger, and Shenghua Gao.
\newblock 2d gaussian splatting for geometrically accurate radiance fields.
\newblock In \emph{ACM SIGGRAPH 2024 Conference Papers}, pages 1--11, 2024.

\bibitem[Jensen et~al.(2014)Jensen, Dahl, Vogiatzis, Tola, and Aanæs]{dtu}
Rasmus Jensen, Anders Dahl, George Vogiatzis, Engil Tola, and Henrik Aanæs.
\newblock Large scale multi-view stereopsis evaluation.
\newblock In \emph{2014 IEEE Conference on Computer Vision and Pattern
  Recognition}, pages 406--413, 2014.

\bibitem[Jiang et~al.(2024)Jiang, Tu, Liu, Gao, Long, Wang, and
  Ma]{gaussianshader}
Yingwenqi Jiang, Jiadong Tu, Yuan Liu, Xifeng Gao, Xiaoxiao Long, Wenping Wang,
  and Yuexin Ma.
\newblock Gaussianshader: 3d gaussian splatting with shading functions for
  reflective surfaces.
\newblock In \emph{Proceedings of the IEEE/CVF Conference on Computer Vision
  and Pattern Recognition}, pages 5322--5332, 2024.

\bibitem[Kerbl et~al.(2023)Kerbl, Kopanas, Leimk{\"u}hler, and Drettakis]{3dgs}
Bernhard Kerbl, Georgios Kopanas, Thomas Leimk{\"u}hler, and George Drettakis.
\newblock 3d gaussian splatting for real-time radiance field rendering.
\newblock \emph{ACM Trans. Graph.}, 42\penalty0 (4):\penalty0 139--1, 2023.

\bibitem[Knapitsch et~al.(2017)Knapitsch, Park, Zhou, and Koltun]{tanks}
Arno Knapitsch, Jaesik Park, Qian-Yi Zhou, and Vladlen Koltun.
\newblock Tanks and temples: Benchmarking large-scale scene reconstruction.
\newblock \emph{ACM Transactions on Graphics}, 36\penalty0 (4), 2017.

\bibitem[Laga et~al.(2020)Laga, Jospin, Boussaid, and
  Bennamoun]{laga2020survey}
Hamid Laga, Laurent~Valentin Jospin, Farid Boussaid, and Mohammed Bennamoun.
\newblock A survey on deep learning techniques for stereo-based depth
  estimation.
\newblock \emph{IEEE transactions on pattern analysis and machine
  intelligence}, 44\penalty0 (4):\penalty0 1738--1764, 2020.

\bibitem[Li et~al.(2023)Li, M\"uller, Evans, Taylor, Unberath, Liu, and
  Lin]{neuralangelo}
Zhaoshuo Li, Thomas M\"uller, Alex Evans, Russell~H Taylor, Mathias Unberath,
  Ming-Yu Liu, and Chen-Hsuan Lin.
\newblock Neuralangelo: High-fidelity neural surface reconstruction.
\newblock In \emph{IEEE Conference on Computer Vision and Pattern Recognition
  ({CVPR})}, 2023.

\bibitem[Liu et~al.(2023{\natexlab{a}})Liu, Wang, Lin, Long, Wang, Liu, Komura,
  and Wang]{liu2023nero}
Yuan Liu, Peng Wang, Cheng Lin, Xiaoxiao Long, Jiepeng Wang, Lingjie Liu, Taku
  Komura, and Wenping Wang.
\newblock Nero: Neural geometry and brdf reconstruction of reflective objects
  from multiview images.
\newblock \emph{ACM Transactions on Graphics (TOG)}, 42\penalty0 (4):\penalty0
  1--22, 2023{\natexlab{a}}.

\bibitem[Liu et~al.(2023{\natexlab{b}})Liu, Wang, Yang, Chen, Meng, Yang, and
  Gao]{neudf}
Yu-Tao Liu, Li Wang, Jie Yang, Weikai Chen, Xiaoxu Meng, Bo Yang, and Lin Gao.
\newblock Neudf: Leaning neural unsigned distance fields with volume rendering.
\newblock In \emph{Proceedings of the IEEE/CVF Conference on Computer Vision
  and Pattern Recognition}, pages 237--247, 2023{\natexlab{b}}.

\bibitem[Long et~al.(2022)Long, Lin, Wang, Komura, and
  Wang]{long2022sparseneus}
Xiaoxiao Long, Cheng Lin, Peng Wang, Taku Komura, and Wenping Wang.
\newblock Sparseneus: Fast generalizable neural surface reconstruction from
  sparse views.
\newblock In \emph{European Conference on Computer Vision}, pages 210--227.
  Springer, 2022.

\bibitem[Long et~al.(2023)Long, Lin, Liu, Liu, Wang, Theobalt, Komura, and
  Wang]{neuraludf}
Xiaoxiao Long, Cheng Lin, Lingjie Liu, Yuan Liu, Peng Wang, Christian Theobalt,
  Taku Komura, and Wenping Wang.
\newblock Neuraludf: Learning unsigned distance fields for multi-view
  reconstruction of surfaces with arbitrary topologies.
\newblock In \emph{Proceedings of the IEEE/CVF Conference on Computer Vision
  and Pattern Recognition}, pages 20834--20843, 2023.

\bibitem[Mildenhall et~al.(2020)Mildenhall, Srinivasan, Tancik, Barron,
  Ramamoorthi, and Ng]{nerf2020}
Ben Mildenhall, Pratul~P. Srinivasan, Matthew Tancik, Jonathan~T. Barron, Ravi
  Ramamoorthi, and Ren Ng.
\newblock Nerf: Representing scenes as neural radiance fields for view
  synthesis.
\newblock In \emph{ECCV}, 2020.

\bibitem[Miller et~al.(2024)Miller, Chen, Lai, and
  Gkioulekas]{object_as_volumes}
Bailey Miller, Hanyu Chen, Alice Lai, and Ioannis Gkioulekas.
\newblock Objects as volumes: A stochastic geometry view of opaque solids.
\newblock In \emph{Proceedings of the IEEE/CVF Conference on Computer Vision
  and Pattern Recognition (CVPR)}, pages 87--97, 2024.

\bibitem[M\"uller(2021)]{tiny-cuda-nn}
Thomas M\"uller.
\newblock {tiny-cuda-nn}, 2021.

\bibitem[M{\"u}ller et~al.(2022)M{\"u}ller, Evans, Schied, and
  Keller]{instant-ngp}
Thomas M{\"u}ller, Alex Evans, Christoph Schied, and Alexander Keller.
\newblock Instant neural graphics primitives with a multiresolution hash
  encoding.
\newblock \emph{ACM transactions on graphics (TOG)}, 41\penalty0 (4):\penalty0
  1--15, 2022.

\bibitem[Pfister et~al.(2000)Pfister, Zwicker, van Baar, and Gross]{Surfels}
Hanspeter Pfister, Matthias Zwicker, Jeroen van Baar, and Markus Gross.
\newblock Surfels: surface elements as rendering primitives.
\newblock In \emph{Proceedings of the 27th Annual Conference on Computer
  Graphics and Interactive Techniques}, page 335–342, USA, 2000. ACM
  Press/Addison-Wesley Publishing Co.

\bibitem[Radl et~al.(2024)Radl, Steiner, Parger, Weinrauch, Kerbl, and
  Steinberger]{radl2024stopthepop}
Lukas Radl, Michael Steiner, Mathias Parger, Alexander Weinrauch, Bernhard
  Kerbl, and Markus Steinberger.
\newblock Stopthepop: Sorted gaussian splatting for view-consistent real-time
  rendering.
\newblock \emph{ACM Transactions on Graphics (TOG)}, 43\penalty0 (4):\penalty0
  1--17, 2024.

\bibitem[Ramamoorthi and Hanrahan(2001)]{sh}
Ravi Ramamoorthi and Pat Hanrahan.
\newblock An efficient representation for irradiance environment maps.
\newblock In \emph{Proceedings of the 28th Annual Conference on Computer
  Graphics and Interactive Techniques}, page 497–500, New York, NY, USA,
  2001. Association for Computing Machinery.

\bibitem[Ravi et~al.(2020)Ravi, Reizenstein, Novotny, Gordon, Lo, Johnson, and
  Gkioxari]{ravi2020pytorch3d}
Nikhila Ravi, Jeremy Reizenstein, David Novotny, Taylor Gordon, Wan-Yen Lo,
  Justin Johnson, and Georgia Gkioxari.
\newblock Accelerating 3d deep learning with pytorch3d.
\newblock \emph{arXiv:2007.08501}, 2020.

\bibitem[Reiser et~al.(2023)Reiser, Szeliski, Verbin, Srinivasan, Mildenhall,
  Geiger, Barron, and Hedman]{reiser2023merf}
Christian Reiser, Rick Szeliski, Dor Verbin, Pratul Srinivasan, Ben Mildenhall,
  Andreas Geiger, Jon Barron, and Peter Hedman.
\newblock Merf: Memory-efficient radiance fields for real-time view synthesis
  in unbounded scenes.
\newblock \emph{ACM Transactions on Graphics (TOG)}, 42\penalty0 (4):\penalty0
  1--12, 2023.

\bibitem[Reiser et~al.(2024)Reiser, Garbin, Srinivasan, Verbin, Szeliski,
  Mildenhall, Barron, Hedman, and Geiger]{BOG}
Christian Reiser, Stephan Garbin, Pratul~P. Srinivasan, Dor Verbin, Richard
  Szeliski, Ben Mildenhall, Jonathan~T. Barron, Peter Hedman, and Andreas
  Geiger.
\newblock Binary opacity grids: Capturing fine geometric detail for mesh-based
  view synthesis.
\newblock \emph{SIGGRAPH}, 2024.

\bibitem[Ren et~al.(2024)Ren, Cao, Fu, and Xue]{mvsneus}
Xinlin Ren, Chenjie Cao, Yanwei Fu, and Xiangyang Xue.
\newblock Improving neural surface reconstruction with feature priors from
  multi-view image.
\newblock \emph{arXiv preprint arXiv:2408.02079}, 2024.

\bibitem[Sch{\"o}nberger et~al.(2016)Sch{\"o}nberger, Zheng, Frahm, and
  Pollefeys]{multi-view-stereo-2}
Johannes~L. Sch{\"o}nberger, Enliang Zheng, Jan-Michael Frahm, and Marc
  Pollefeys.
\newblock Pixelwise view selection for unstructured multi-view stereo.
\newblock In \emph{Computer Vision -- ECCV 2016}, pages 501--518, Cham, 2016.
  Springer International Publishing.

\bibitem[Sun et~al.(2003)Sun, Zheng, and Shum]{1206509}
Jian Sun, Nan-Ning Zheng, and Heung-Yeung Shum.
\newblock Stereo matching using belief propagation.
\newblock \emph{IEEE Transactions on Pattern Analysis and Machine
  Intelligence}, 25\penalty0 (7):\penalty0 787--800, 2003.

\bibitem[Tang(2022)]{torch-ngp}
Jiaxiang Tang.
\newblock Torch-ngp: a pytorch implementation of instant-ngp, 2022.
\newblock https://github.com/ashawkey/torch-ngp.

\bibitem[Verbin et~al.(2022)Verbin, Hedman, Mildenhall, Zickler, Barron, and
  Srinivasan]{refnerf}
Dor Verbin, Peter Hedman, Ben Mildenhall, Todd Zickler, Jonathan~T Barron, and
  Pratul~P Srinivasan.
\newblock Ref-nerf: Structured view-dependent appearance for neural radiance
  fields.
\newblock In \emph{2022 IEEE/CVF Conference on Computer Vision and Pattern
  Recognition (CVPR)}, pages 5481--5490. IEEE, 2022.

\bibitem[Verbin et~al.(2024)Verbin, Srinivasan, Hedman, Mildenhall, Attal,
  Szeliski, and Barron]{verbin2024nerf}
Dor Verbin, Pratul~P Srinivasan, Peter Hedman, Ben Mildenhall, Benjamin Attal,
  Richard Szeliski, and Jonathan~T Barron.
\newblock Nerf-casting: Improved view-dependent appearance with consistent
  reflections.
\newblock In \emph{SIGGRAPH Asia 2024 Conference Papers}, pages 1--10, 2024.

\bibitem[Vu et~al.(2012)Vu, Labatut, Pons, and Keriven]{5989831}
Hoang-Hiep Vu, Patrick Labatut, Jean-Philippe Pons, and Renaud Keriven.
\newblock High accuracy and visibility-consistent dense multiview stereo.
\newblock \emph{IEEE Transactions on Pattern Analysis and Machine
  Intelligence}, 34\penalty0 (5):\penalty0 889--901, 2012.

\bibitem[Wang et~al.(2021{\natexlab{a}})Wang, Galliani, Vogel, Speciale, and
  Pollefeys]{patchmatchnet}
Fangjinhua Wang, Silvano Galliani, Christoph Vogel, Pablo Speciale, and Marc
  Pollefeys.
\newblock Patchmatchnet: Learned multi-view patchmatch stereo.
\newblock In \emph{Proceedings of the IEEE/CVF conference on computer vision
  and pattern recognition}, pages 14194--14203, 2021{\natexlab{a}}.

\bibitem[Wang et~al.(2025)Wang, Rakotosaona, Niemeyer, Szeliski, Pollefeys, and
  Tombari]{wang2025unisdf}
Fangjinhua Wang, Marie-Julie Rakotosaona, Michael Niemeyer, Richard Szeliski,
  Marc Pollefeys, and Federico Tombari.
\newblock Unisdf: Unifying neural representations for high-fidelity 3d
  reconstruction of complex scenes with reflections.
\newblock \emph{Advances in Neural Information Processing Systems},
  37:\penalty0 3157--3184, 2025.

\bibitem[Wang et~al.(2021{\natexlab{b}})Wang, Liu, Liu, Theobalt, Komura, and
  Wang]{neus}
Peng Wang, Lingjie Liu, Yuan Liu, Christian Theobalt, Taku Komura, and Wenping
  Wang.
\newblock Neus: Learning neural implicit surfaces by volume rendering for
  multi-view reconstruction.
\newblock \emph{NeurIPS}, 2021{\natexlab{b}}.

\bibitem[Wang et~al.(2022)Wang, Li, and Dai]{effimvs}
Shaoqian Wang, Bo Li, and Yuchao Dai.
\newblock Efficient multi-view stereo by iterative dynamic cost volume.
\newblock In \emph{Proceedings of the IEEE/CVF Conference on Computer Vision
  and Pattern Recognition}, pages 8655--8664, 2022.

\bibitem[Wang et~al.(2021{\natexlab{c}})Wang, Wang, Liu, Zhou, Zhang, Zheng,
  and Bai]{WANG2021102102}
Xiang Wang, Chen Wang, Bing Liu, Xiaoqing Zhou, Liang Zhang, Jin Zheng, and
  Xiao Bai.
\newblock Multi-view stereo in the deep learning era: A comprehensive review.
\newblock \emph{Displays}, 70:\penalty0 102102, 2021{\natexlab{c}}.

\bibitem[Wang et~al.(2023)Wang, Han, Habermann, Daniilidis, Theobalt, and
  Liu]{neus2}
Yiming Wang, Qin Han, Marc Habermann, Kostas Daniilidis, Christian Theobalt,
  and Lingjie Liu.
\newblock Neus2: Fast learning of neural implicit surfaces for multi-view
  reconstruction.
\newblock In \emph{Proceedings of the IEEE/CVF International Conference on
  Computer Vision}, pages 3295--3306, 2023.

\bibitem[Wolf et~al.(2025)Wolf, Bracha, and Kimmel]{wolf2024gsmesh}
Yaniv Wolf, Amit Bracha, and Ron Kimmel.
\newblock {GS}2{M}esh: Surface reconstruction from {G}aussian splatting via
  novel stereo views.
\newblock In \emph{European Conference on Computer Vision}, pages 207--224.
  Springer, 2025.

\bibitem[Wu et~al.(2022)Wu, Wang, Pan, Xu, Theobalt, Liu, and Lin]{voxurf}
Tong Wu, Jiaqi Wang, Xingang Pan, Xudong Xu, Christian Theobalt, Ziwei Liu, and
  Dahua Lin.
\newblock Voxurf: Voxel-based efficient and accurate neural surface
  reconstruction.
\newblock \emph{arXiv preprint arXiv:2208.12697}, 2022.

\bibitem[Xiong et~al.(2023)Xiong, Peng, Zhang, Feng, Jiao, Gao, and
  Wang]{cl-mvsnet}
Kaiqiang Xiong, Rui Peng, Zhe Zhang, Tianxing Feng, Jianbo Jiao, Feng Gao, and
  Ronggang Wang.
\newblock Cl-mvsnet: Unsupervised multi-view stereo with dual-level contrastive
  learning.
\newblock In \emph{Proceedings of the IEEE/CVF International Conference on
  Computer Vision}, pages 3769--3780, 2023.

\bibitem[Yao et~al.(2018)Yao, Luo, Li, Fang, and Quan]{yao2018mvsnet}
Yao Yao, Zixin Luo, Shiwei Li, Tian Fang, and Long Quan.
\newblock Mvsnet: Depth inference for unstructured multi-view stereo.
\newblock In \emph{Proceedings of the European conference on computer vision
  (ECCV)}, pages 767--783, 2018.

\bibitem[Yao et~al.(2020)Yao, Luo, Li, Zhang, Ren, Zhou, Fang, and
  Quan]{yao2020blendedmvs}
Yao Yao, Zixin Luo, Shiwei Li, Jingyang Zhang, Yufan Ren, Lei Zhou, Tian Fang,
  and Long Quan.
\newblock Blendedmvs: A large-scale dataset for generalized multi-view stereo
  networks.
\newblock \emph{Computer Vision and Pattern Recognition (CVPR)}, 2020.

\bibitem[Yariv et~al.(2020)Yariv, Kasten, Moran, Galun, Atzmon, Ronen, and
  Lipman]{idr}
Lior Yariv, Yoni Kasten, Dror Moran, Meirav Galun, Matan Atzmon, Basri Ronen,
  and Yaron Lipman.
\newblock Multiview neural surface reconstruction by disentangling geometry and
  appearance.
\newblock \emph{Advances in Neural Information Processing Systems},
  33:\penalty0 2492--2502, 2020.

\bibitem[Yariv et~al.(2021)Yariv, Gu, Kasten, and Lipman]{volsdf}
Lior Yariv, Jiatao Gu, Yoni Kasten, and Yaron Lipman.
\newblock Volume rendering of neural implicit surfaces.
\newblock \emph{Advances in Neural Information Processing Systems},
  34:\penalty0 4805--4815, 2021.

\bibitem[Yariv et~al.(2023)Yariv, Hedman, Reiser, Verbin, Srinivasan, Szeliski,
  Barron, and Mildenhall]{BakedSDF}
Lior Yariv, Peter Hedman, Christian Reiser, Dor Verbin, Pratul~P. Srinivasan,
  Richard Szeliski, Jonathan~T. Barron, and Ben Mildenhall.
\newblock Bakedsdf: Meshing neural sdfs for real-time view synthesis.
\newblock In \emph{ACM SIGGRAPH 2023 Conference Proceedings}, New York, NY,
  USA, 2023. Association for Computing Machinery.

\bibitem[Ye et~al.(2024{\natexlab{a}})Ye, Hou, and Zhou]{3dgsdr}
Keyang Ye, Qiming Hou, and Kun Zhou.
\newblock 3d gaussian splatting with deferred reflection.
\newblock In \emph{ACM SIGGRAPH 2024 Conference Papers}, pages 1--10,
  2024{\natexlab{a}}.

\bibitem[Ye et~al.(2024{\natexlab{b}})Ye, Li, Liu, Qiao, and Dou]{ye2024absgs}
Zongxin Ye, Wenyu Li, Sidun Liu, Peng Qiao, and Yong Dou.
\newblock Absgs: Recovering fine details in 3d gaussian splatting.
\newblock In \emph{ACM Multimedia 2024}, 2024{\natexlab{b}}.

\bibitem[Yu and Gao(2020)]{yu2020fast}
Zehao Yu and Shenghua Gao.
\newblock Fast-mvsnet: Sparse-to-dense multi-view stereo with learned
  propagation and gauss-newton refinement.
\newblock In \emph{Proceedings of the IEEE/CVF conference on computer vision
  and pattern recognition}, pages 1949--1958, 2020.

\bibitem[Yu et~al.(2024{\natexlab{a}})Yu, Chen, Huang, Sattler, and
  Geiger]{Yu2024MipSplatting}
Zehao Yu, Anpei Chen, Binbin Huang, Torsten Sattler, and Andreas Geiger.
\newblock Mip-splatting: Alias-free 3d gaussian splatting.
\newblock In \emph{Proceedings of the IEEE/CVF Conference on Computer Vision
  and Pattern Recognition (CVPR)}, pages 19447--19456, 2024{\natexlab{a}}.

\bibitem[Yu et~al.(2024{\natexlab{b}})Yu, Sattler, and Geiger]{GOF}
Zehao Yu, Torsten Sattler, and Andreas Geiger.
\newblock Gaussian opacity fields: Efficient high-quality compact surface
  reconstruction in unbounded scenes.
\newblock \emph{arXiv:2404.10772}, 2024{\natexlab{b}}.

\bibitem[Zeng et~al.(2024)Zeng, Deschaintre, Georgiev, Hold-Geoffroy, Hu, Luan,
  Yan, and Ha{\v{s}}an]{zeng2024rgb}
Zheng Zeng, Valentin Deschaintre, Iliyan Georgiev, Yannick Hold-Geoffroy, Yiwei
  Hu, Fujun Luan, Ling-Qi Yan, and Milo{\v{s}} Ha{\v{s}}an.
\newblock Rgb$\leftrightarrow$x: Image decomposition and synthesis using
  material-and lighting-aware diffusion models.
\newblock In \emph{ACM SIGGRAPH 2024 Conference Papers}, pages 1--11, 2024.

\bibitem[Zhang et~al.(2024)Zhang, Fang, Shrestha, Liang, Long, and Tan]{radegs}
Baowen Zhang, Chuan Fang, Rakesh Shrestha, Yixun Liang, Xiaoxiao Long, and Ping
  Tan.
\newblock Rade-gs: Rasterizing depth in gaussian splatting.
\newblock \emph{arXiv preprint arXiv:2406.01467}, 2024.

\bibitem[Zhao et~al.(2024)Zhao, Wu, Huang, Zhi, Zhao, Wang, and
  Gao]{zhao2024surfel}
Yiqun Zhao, Chenming Wu, Binbin Huang, Yihao Zhi, Chen Zhao, Jingdong Wang, and
  Shenghua Gao.
\newblock Surfel-based gaussian inverse rendering for fast and relightable
  dynamic human reconstruction from monocular video.
\newblock \emph{arXiv preprint arXiv:2407.15212}, 2024.

\bibitem[Zhou et~al.(2018)Zhou, Park, and Koltun]{open3d}
Qian-Yi Zhou, Jaesik Park, and Vladlen Koltun.
\newblock Open3d: A modern library for 3d data processing.
\newblock \emph{arXiv preprint arXiv:1801.09847}, 2018.

\bibitem[Zhu et~al.(2021)Zhu, Min, Wei, Chen, and Wang]{zhu2021deep}
Qingtian Zhu, Chen Min, Zizhuang Wei, Yisong Chen, and Guoping Wang.
\newblock Deep learning for multi-view stereo via plane sweep: A survey.
\newblock \emph{arXiv preprint arXiv:2106.15328}, 2021.

\bibitem[Zwicker et~al.(2001)Zwicker, Pfister, van Baar, and
  Gross]{volume_splatting}
M. Zwicker, H. Pfister, J. van Baar, and M. Gross.
\newblock Ewa volume splatting.
\newblock In \emph{Proceedings Visualization, 2001. VIS '01.}, pages 29--538,
  2001.

\end{thebibliography}
}

\ifarxiv \clearpage \appendix \renewcommand{\baselinestretch}{1.00}\normalsize

\maketitlesupplementary

\appendix

\section{Derivations}
\subsection{Revisit Volume Splatting}
\subsubsection{Original Derivation}
Given a density field $\sigma(\mathbf{x)}$ and a color field $\mathbf{c}(\mathbf{x})$, where $\mathbf{x}\in\mathbb{R}^3$, the volume splatting algorithm \cite{volume_splatting} proposes to decompose the density field into the weighted sum of a set of $n$ independent kernels $\{\mathcal{K}_1, \mathcal{K}_2, ..., \mathcal{K}_n\}$, each of which is a function mapping from $x$ to a scalar and associated with a weight $\omega_i \in \mathbb{R}_{+}, i=1,2,...,n$.

Formally, the density field is expressed as:
\begin{equation}
\label{equation-decomposition}
    \sigma(\mathbf{x}) = \sum_{i=1}^n \omega_i \mathcal{K}_i(\mathbf{x}).
\end{equation}
This decomposition helps simplify the volume rendering equation, which does not consider the scattering, for efficient rendering.

Specifically, given a ray shooting from the camera origin $\mathbf{o}$ with direction $\mathbf{d}$, the location of a point $\mathbf{p}$ on the ray can be expressed as $\mathbf{p}(l) = \mathbf{o}+\mathbf{d}l$, where $l\in\mathbb{R}_{+}$ denotes the depth. The rendered color $\mathbf{C}$ with the exponential falloff is given by the following equation:
\begin{equation}
\label{equation-volume-rendering-color}
    \mathbf{C} = \int_{0}^{\infty} \mathbf{c}(\mathbf{p}(l)) \sigma(\mathbf{p}(l)) \exp(-\int_{0}^{l}\sigma(\mathbf{p}(l'))dl') dl.
\end{equation}
To simplify the notation without losing generality, we rewrite the Eqn.~\ref{equation-volume-rendering-color} as:
\begin{equation}
\label{equation-volume-rendering-color-sim}
    \mathbf{C} = \int_{0}^{\infty} \mathbf{c}(l) \sigma(l) \exp(-\int_{0}^{l}\sigma(l')dl') dl.
\end{equation}

\citet{volume_splatting} propose to choose each of the kernels which compose the density field to have finite intersection intervals on the ray and assume that there is no overlapping between intersection intervals of any two of them. These kernels can then be sorted based on their intersection interval along the ray. By plugging the Eqn.~\ref{equation-decomposition} into the Eqn.~\ref{equation-volume-rendering-color-sim}, we get:
\begin{equation}
\label{equation-volume-rendering-decom}
\begin{aligned}
    \mathbf{C} =& \int_{0}^{\infty} \sum_{i=1}^{n} \mathbf{c}(l) \omega_{i}\mathcal{K}_i(l) \exp(-\int_{0}^{l}\omega_{i}\mathcal{K}_i(l')dl') \\
    & \prod_{j=1}^{i-1} \exp(-\int_{0}^{l}\omega_{j}\mathcal{K}_j(l')dl') dl.
\end{aligned}
\end{equation}

Relying on the assumption that each kernel has finite intersection interval and there is no overlapping, we can further rewrite the Eqn.~\ref{equation-volume-rendering-decom} as:
\begin{equation}
\label{equation-volume-rendering-decom-sim}
\begin{aligned}
    \mathbf{C} =& \sum_{i=1}^{n} (\prod_{j=1}^{i-1} \exp(-\int_{0}^{\infty}\omega_{j}\mathcal{K}_j(l')dl')) \\
    & (\int_{0}^{\infty} \mathbf{c}(l) \omega_{i}\mathcal{K}_i(l) \exp(-\int_{0}^{l}\omega_{i}\mathcal{K}_i(l')dl') dl).
\end{aligned}
\end{equation}

\citet{volume_splatting} then assume the color is constant within the intersection interval of each individual kernel and ignores the self-occlusion. Eqn.~\ref{equation-volume-rendering-decom-sim} can then be written as:
\begin{equation}
    \mathbf{C} \approx \sum_{i=1}^{n} (\prod_{j=1}^{i-1} \exp(-\int_{0}^{\infty}\omega_{j}\mathcal{K}_j(l')dl')) (\mathbf{c}_i \int_{0}^{\infty} \omega_{i}\mathcal{K}_i(l) dl), 
\end{equation}
where $\mathbf{c}_i$ denotes the constant color value within the intersection interval of $i^\text{th}$ kernel.

\citet{volume_splatting} then propose to expand the exponential term using the Taylor series and defines a value $\rho_i = \omega_i \int_{0}^{\infty} \mathcal{K}_{i}(l)dl$, which is called the footprint function with respect to the current ray. We can then reach:
\begin{equation}
\label{equation-approx-final}
\begin{aligned}
    \mathbf{C} &\approx \sum_{i=1}^{n} (\prod_{j=1}^{i-1} (1 - \rho_{j})) (\mathbf{c}_i \rho_{i}) \\
    &= \sum_{i=1}^{n} \mathbf{c}_i \rho_{i} (\prod_{j=1}^{i-1} (1 - \rho_{j})), 
\end{aligned}
\end{equation}
which corresponds to the Eqn. 2 in the main paper. 
Notice that, even though the kernels are used to decompose the density field, due to the Taylor series expansion, $\rho_{i}, i=1,2,...,n$ has to be within the range $[0, 1]$. %
Also, notice that, in a more general sense, the footprint function is the integration of the density along the intersection interval, but it is assumed that there is only one kernel constituting the density there, therefore, the footprint function is simplified as the integration of the kernel values.

Therefore, as long as we can evaluate the footprint function $\rho$ easily and even differentiably, we can then reach an algorithm for efficient volume rendering, and even inverse rendering \cite{3dgs} under certain approximations.

As a summary, we identify following factors that make the volume splatting an approximate rendering algorithm:
\begin{itemize}
    \item Self-occlusion is ignored.
    \item Transmittance term is approximated through the Taylor expansion.
    \item It is assumed that there is no overlapping between the intersection intervals of any two kernels, but in practice, it is not the case, which leads to the sorting not clearly defined, as observed in \cite{radl2024stopthepop}.
\end{itemize}

A common choice of kernel is to use 3D Gaussian kernel as in \cite{volume_splatting, 3dgs}. In \cite{3dgs}, the footprint function of 3D Gaussian kernel is further approximated due to the perspective transform, which introduces additional bias.

A recent emerging interest \cite{2dgs, gaussian_surfel} is to use 2D Gaussian kernel, which is then known as Gaussian surfel \cite{Surfels}. 
In this case, perspective projection is no longer a problem because 
{the intersection between the 2D Gaussian kernel and the ray can be efficiently calculated \cite{2dgs}.}
And the sorting is always well-defined because the intersection interval along the ray is in general a point instead of an interval in the case of 3D Gaussian. However, the footprint function is then undefined because the integrand is a discontinuous function which has a finite value at the place where ray intersects with the 2D Gaussian kernel, and zero otherwise. The workaround proposed in \cite{2dgs} is to simply use the finite value as the footprint function, which still introduces further bias.

A rigorous exact rendering algorithm with efficient splatting is still an open problem, and we manage to solve it.

\subsubsection{Derivation of Refined Splatting Algorithm}
To move towards the exact rendering, we first address two approximations in the aforementioned derivation. We do not ignore the self-occlusion and do not expand the transmittance term. This corresponds to the Eqn. 7 in the main paper.

Specifically, from the Eqn.~\ref{equation-volume-rendering-decom-sim}, we have:
\begin{equation}
\label{equation-refined-color}
\begin{aligned}
    \mathbf{C} =& \sum_{i=1}^{n} (\prod_{j=1}^{i-1} \exp(-\rho_{j})) (\mathbf{c}_i \int_{0}^{\infty} \omega_{i}\mathcal{K}_i(l) \\
    &\exp(-\int_{0}^{l}\omega_{i}\mathcal{K}_i(l')dl') dl).
\end{aligned}
\end{equation}

Notice that:
\begin{equation}
    \frac{d}{dl}\exp(-\int_{0}^{l}\omega_{i}\mathcal{K}_i(l')dl') = -\omega_{i}\mathcal{K}_i(l)\exp(-\int_{0}^{l}\omega_{i}\mathcal{K}_i(l')dl')
\end{equation}

Therefore, from Eqn.~\ref{equation-refined-color} we have:
\begin{equation}
\label{refined-volume-rendering}
\begin{aligned}
    \mathbf{C} &= \sum_{i=1}^{n} (\prod_{j=1}^{i-1} \exp(-\rho_{j})) (\mathbf{c}_i (-\exp(-\int_{0}^{l}\omega_{i}\mathcal{K}_i(l')dl'))|^{\infty}_{0}) \\
    &=\sum_{i=1}^{n} (\prod_{j=1}^{i-1} \exp(-\rho_{j})) (\mathbf{c}_i (1-\exp(-\int_{0}^{\infty}\omega_{i}\mathcal{K}_i(l')dl'))) \\
    &=\sum_{i=1}^{n} (\prod_{j=1}^{i-1} \exp(-\rho_{j})) (\mathbf{c}_i (1-\exp(-\rho_{i}))) \\
    &=\sum_{i=1}^{n} \mathbf{c}_i (1-\exp(-\rho_{i})) \prod_{j=1}^{i-1} \exp(-\rho_{j}), 
\end{aligned}
\end{equation}
which corresponds to the Eqn. 7 in the main paper. 
Therefore, there is \textbf{no restriction on the value range} of $\rho_i, i=1,2,...,n$.

\subsection{Parameterize and Render Geometry Field with Gaussian Surfels.}
We focus on deriving the Eqn. 15 in the paper here. 
Recall that our goal is to evaluate the following equation:
\begin{equation}
\label{eqn-eqn-15}
    \rho_i = \int_{t_i - h/\cos\theta_i}^{t_i + h/\cos\theta_i} \frac{\psi(-F(\mathbf{x}(t)))}{\Psi(-F(\mathbf{x}(t)))} ||\nabla F(\mathbf{x}(t))||\cdot|\boldsymbol{\omega}\cdot\mathbf{n}(\mathbf{x}(t))| dt, 
\end{equation}
where $\Psi(\cdot)$ is the CDF of the standard normal distribution and $\psi(\cdot) = \Psi'(\cdot)$.
Besides, when $|t - t_i| < h/\cos\theta_i$, 
\begin{equation}
\begin{aligned}
    F(\mathbf{x}(t)) &= f_i \times \left(1-\frac{|t-t_i|}{h/\cos\theta_i}\right) -c, \\
    ||\nabla F(\mathbf{x}(t)) || &= f_i / h, \\
    \cos\theta_i &= |\boldsymbol{\omega}\cdot\mathbf{n}_i|. \\
\end{aligned}
\end{equation}
Recall that $\nabla F(\mathbf{x}(t))$ is parallel to $\mathbf{n}_i$ and we have $\mathbf{n}(\mathbf{x}(t))=\nabla F(\mathbf{x}(t)) / ||\nabla F(\mathbf{x}(t))||$, which implies that $\cos\theta_i = |\boldsymbol{\omega}\cdot\mathbf{n}(\mathbf{x}(t))|$.

Notice that \cref{eqn-eqn-15} is a symmetric integration with respect to $t_i$, therefore, 
\begin{equation}
\label{eqn-eqn-15-sim-1}
\begin{aligned}
    \rho_i &= 2\int_{t_i}^{t_i + h/\cos\theta_i} \frac{\psi(-F(\mathbf{x}(t)))}{\Psi(-F(\mathbf{x}(t)))} ||\nabla F(\mathbf{x}(t))||\cdot|\boldsymbol{\omega}\cdot\mathbf{n}(\mathbf{x}(t))| dt \\
    &= 2\int_{t_i}^{t_i + h/\cos\theta_i} \frac{\psi(-F(\mathbf{x}(t)))}{\Psi(-F(\mathbf{x}(t)))} \cdot(f_i/h)\cdot(\cos\theta_i) dt.
\end{aligned}
\end{equation}

Notice that, when $t\in[t_i, t_i + h/\cos\theta_i]$:
\begin{equation}
\begin{aligned}
    \frac{d}{dt}\ln\Psi(-F(\mathbf{x}(t))) &= \frac{\psi(-F(\mathbf{x}(t)))}{\Psi(-F(\mathbf{x}(t)))}\cdot (-F(\mathbf{x}(t)))' \\
    &= \frac{\psi(-F(\mathbf{x}(t)))}{\Psi(-F(\mathbf{x}(t)))}\cdot (c-f_i(1 - \frac{t-t_i}{h/\cos\theta_i}))' \\
    &= \frac{\psi(-F(\mathbf{x}(t)))}{\Psi(-F(\mathbf{x}(t)))}\cdot \frac{f_i}{h/\cos\theta_i} \\
    &= \frac{\psi(-F(\mathbf{x}(t)))}{\Psi(-F(\mathbf{x}(t)))}\cdot \frac{f_i}{h}\cdot \cos\theta_i.
\end{aligned}
\end{equation}

Therefore, \cref{eqn-eqn-15-sim-1} is then:
\begin{equation}
    \begin{aligned}
        \rho_i &= 2\int_{t_i}^{t_i + h/\cos\theta_i} \frac{\psi(-F(\mathbf{x}(t)))}{\Psi(-F(\mathbf{x}(t)))} \cdot(f_i/h)\cdot(\cos\theta_i) dt \\
        &= 2 \ln\Psi(-F(\mathbf{x}(t)))|^{t_i + h/\cos\theta_i}_{t_i} \\
        &= 2 (\ln\Psi(-(-c)) - \ln\Psi(-(f_i - c))) \\
        &= 2 (\ln\Psi(c) - \ln\Psi(c - f_i)) \\
        &= -2\ln\Psi(c - f_i), 
    \end{aligned}
\end{equation}
where $c$ is a large positive number such that $\Psi(c)=1$, and therefore $\ln\Psi(c)=0$.

Notice that we enable the calculation of footprint function by using the extrusion with width $2h$, and it can be seen as equivalent to the original case without the extrusion, by letting $h\rightarrow0$.

\paragraph{Discussion about Overlapping.} It is less obvious that the intersections between Gaussian surfels and the ray could overlap, i.e., the intersection points coincide, but we argue that it is indeed the case in practice.

First of all, we explicitly utilize the depth distortion loss to promote the ray to intersect the visible surface exactly once. Additionally, assume we have a flattened surface in the space. It then requires certain number of Gaussian surfels to compose it. Since we explicitly enforce the depth-normal consistency to smooth the geometry, these Gaussian surfels could become coplanar and partially overlap with each other within the accuracy of floating point numbers. Notice that the partial overlapping here refers to the overlapping of Gaussian surfels in the 3D space instead of the overlapping of intersections with respect to the ray which we mainly talk about. Therefore, when a ray falls into the area where Gaussian surfels overlap, the intersections between Gaussian surfels and the ray then coincide.

\section{Additional Implementation Details}
\label{sec:imp-details}
As to the depth distortion loss, the regularization weight is set to $1000$ for the DTU dataset, $10$ for the BlendedMVS dataset, and $0$ for the Mip-NeRF 360 dataset.

\subsection{Geometry Field Splatting}
We base our rasterizer implementation on that of 2DGS \cite{2dgs} using CUDA. 
Given a ray, the footprint function for the $i^\text{th}$ Gaussian surfel is defined as:
\begin{equation}
\begin{aligned}
    \rho_i &= -2\ln\Psi(c - f_i) \\
    &= -2\ln(0.5+0.5\text{erf}(\frac{c - f_i}{\sqrt{2}})). 
\end{aligned}
\end{equation}
We choose $c=3$, because $\Psi(3)\approx0.999$. Therefore, 
\begin{equation}
\begin{aligned}
    \rho_i = -2\ln(0.5+0.5\text{erf}(\frac{3 - f_i}{\sqrt{2}})), 
\end{aligned}
\end{equation}
We further clamp the maximum of $f_i$ to be $4.28$ such that the converted opacity $1-\exp(-\rho_i)\approx0.99$, because as in the original 3DGS \cite{3dgs} and 2DGS \cite{2dgs} implementations, the maximum of opacity is clamped to be $0.99$. Namely, 
\begin{equation}
\label{eqn-rho}
\begin{aligned}
    \rho_i = -2\ln(0.5+0.5\text{erf}(\frac{3 - \min\{f_i, 4.28\}}{\sqrt{2}})). 
\end{aligned}
\end{equation}
However, we find that when $f_i$ is small, the derivative of $\rho_i$ is close to zero and propagates almost zero gradients back to the parameter due to the numerical error, which hinders the optimization.

To alleviate the numerical error, we approximate \cref{eqn-rho} with a polynomial function:
\begin{equation}
    \rho_i \approx 0.03279 (\min\{f_i, 4.28\})^{3.4}, 
\end{equation}

We plot functions $F(x) = -2\ln(0.5 + 0.5\text{erf}(\frac{3-x}{\sqrt{2}}))$ and $G(x)=0.03279x^{3.4}$ in \cref{fig:functions}, and it can be seen that these two functions are close. %

\subsection{Remedy Loss Landscape Defects}
The color propagation per ray is approximated by propagating color in the $\mathbb{R}^3$ space. However, propagating the colors of all Gaussian surfels into the color of every Gaussian surfel is impractical.

Therefore, we use k-closest-point \cite{ravi2020pytorch3d} algorithm based on the centers of Gaussian surfels to identify $10$ closest Gaussian surfels to each Gaussian surfel every $100$ iterations. 
The color of every Gaussian surfel is then blended based on these $10$ closest Gaussian surfels which include itself. 
The choice of number of closest Gaussian surfels is made by ensuring the program does not cause out of memory error with either of our two color representations on a standard consumer-level graphics card with around $10$ GB memory, on the DTU dataset \cite{dtu}.

\subsection{Improve Color Representation}
Following the default setting of \cite{torch-ngp}, the shallow MLP is implemented with $2$ hidden layers, and the spherical harmonics encoding of directions has degree $4$.

\begin{figure}[t]
    \centering
    \includegraphics[width=\linewidth]{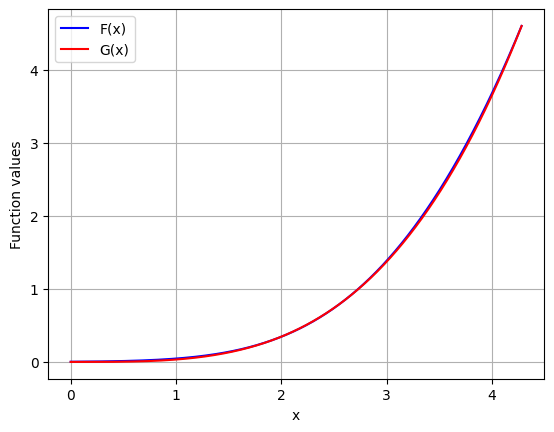}
    \caption{Plotting of function values of $F(x) = -2\ln(0.5 + 0.5\text{erf}(\frac{3-x}{\sqrt{2}}))$ and $G(x)=0.03279x^{3.4}$ over the range $[0, 4.28]$.}
    \label{fig:functions}
\end{figure}

\begin{table*}[]
\centering
\footnotesize
\setlength{\tabcolsep}{5pt}
\begin{tabular}{l|ccccccccccccccc|c}
\hline
\textbf{Methods}                       & \textbf{24} & \textbf{37} & \textbf{40} & \textbf{55} & \textbf{63} & \textbf{65} & \textbf{69} & \textbf{83} & \textbf{97} & \textbf{105} & \textbf{106} & \textbf{110} & \textbf{114} & \textbf{118} & \textbf{122} & \textbf{Avg. $\downarrow$} \\ \hline
Ours (SH) & 
\cellcolor{red!25}0.38 & 
\cellcolor{red!25}0.63 & 
\cellcolor{red!25}0.30 & 
\cellcolor{orange!25}0.35 & 
\cellcolor{orange!25}0.79 & 
\cellcolor{orange!25}0.63 & 
\cellcolor{orange!25}0.65 & 
\cellcolor{red!25}1.10 & 
1.21 & 
\cellcolor{red!25}0.62 & 
\cellcolor{orange!25}0.48 & 
\cellcolor{yellow!25}1.18 & 
\cellcolor{orange!25}0.33 & 
\cellcolor{red!25}0.42 & 
\cellcolor{orange!25}0.38 & 
\cellcolor{orange!25}0.63
\\
- Geometry Field Splatting &
0.50 &
\cellcolor{yellow!25}0.67 &
\cellcolor{yellow!25}0.35 &
0.43 &
0.94 &
0.89 &
\cellcolor{yellow!25}0.70 &
\cellcolor{yellow!25}1.17 &
\cellcolor{red!25}1.18 &
\cellcolor{yellow!25}0.65 &
0.57 &
\cellcolor{red!25}1.02 &
0.39 &
0.61 &
0.48 &
0.70
\\
- Remedy Loss Landscape Defects & 
\cellcolor{yellow!25}0.42 & 
\cellcolor{yellow!25}0.67 & 
\cellcolor{yellow!25}0.35 & 
\cellcolor{yellow!25}0.36 & 
\cellcolor{yellow!25}0.85 & 
\cellcolor{yellow!25}0.69 & 
\cellcolor{yellow!25}0.70 & 
1.19 & 
\cellcolor{orange!25}1.19 & 
\cellcolor{yellow!25}0.65 & 
\cellcolor{yellow!25}0.49 & 
1.24 & 
\cellcolor{yellow!25}0.36 & 
\cellcolor{yellow!25}0.46 & 
\cellcolor{yellow!25}0.40 & 
\cellcolor{yellow!25}0.67
\\
+ Per-Ray Sorting & 
\cellcolor{orange!25}0.39 & 
\cellcolor{orange!25}0.65 & 
\cellcolor{orange!25}0.31 & 
\cellcolor{red!25}0.33 & 
\cellcolor{red!25}0.78 & 
\cellcolor{red!25}0.49 & 
\cellcolor{red!25}0.61 & 
\cellcolor{orange!25}1.14 & 
\cellcolor{yellow!25}1.20 & 
\cellcolor{orange!25}0.64 & 
\cellcolor{red!25}0.46 & 
\cellcolor{orange!25}1.07 & 
\cellcolor{red!25}0.29 & 
\cellcolor{orange!25}0.44 & 
\cellcolor{red!25}0.37 & 
\cellcolor{red!25}0.61
\\ \hline
Ours (Latent) & 
\cellcolor{yellow!25}0.40 & 
\cellcolor{orange!25}0.59 & 
\cellcolor{orange!25}0.39 & 
0.38 & 
\cellcolor{red!25}0.72 & 
\cellcolor{orange!25}0.59 & 
\cellcolor{yellow!25}0.65 & 
\cellcolor{red!25}1.08 & 
\cellcolor{red!25}0.93 & 
\cellcolor{red!25}0.59 & 
\cellcolor{orange!25}0.50 & 
\cellcolor{yellow!25}0.67 & 
\cellcolor{orange!25}0.34 & 
\cellcolor{red!25}0.47 & 
\cellcolor{yellow!25}0.40 & 
\cellcolor{orange!25}0.58
\\
- Geometry Field Splatting &
0.52 &
0.79 &
0.47 &
0.52 &
0.91 &
0.71 &
0.84 &
1.17 &
1.07 &
0.63 &
0.65 &
0.82 &
0.42 &
0.75 &
0.53 &
0.72
\\
- Remedy Loss Landscape Defects &
\cellcolor{red!25}0.38 &
\cellcolor{red!25}0.56 &
\cellcolor{orange!25}0.39 &
\cellcolor{orange!25}0.36 &
0.80 &
\cellcolor{yellow!25}0.60 &
0.68 &
\cellcolor{yellow!25}1.13 &
\cellcolor{orange!25}0.95 &
0.64 &
\cellcolor{orange!25}0.50 &
\cellcolor{orange!25}0.65 &
\cellcolor{yellow!25}0.35 &
\cellcolor{orange!25}0.48 &
\cellcolor{yellow!25}0.40 &
\cellcolor{yellow!25}0.59
\\
+ Per-Ray Sorting & 
0.43 & 
0.64 & 
\cellcolor{red!25}0.38 & 
\cellcolor{red!25}0.34 & 
\cellcolor{orange!25}0.73 & 
\cellcolor{red!25}0.50 & 
\cellcolor{red!25}0.57 & 
\cellcolor{yellow!25}1.13 & 
1.00 & 
0.65 & 
\cellcolor{red!25}0.49 & 
\cellcolor{red!25}0.59 & 
\cellcolor{red!25}0.30 & 
\cellcolor{orange!25}0.48 & 
\cellcolor{red!25}0.38 & 
\cellcolor{red!25}0.57
\\
- Ray Direction Conditioning & 
0.47 & 
\cellcolor{yellow!25}0.61 & 
0.42 & 
0.39 & 
0.81 & 
0.78 & 
0.75 & 
\cellcolor{yellow!25}1.13 & 
\cellcolor{yellow!25}0.98 & 
\cellcolor{orange!25}0.60 & 
\cellcolor{yellow!25}0.56 & 
\cellcolor{yellow!25}0.67 & 
0.39 & 
\cellcolor{yellow!25}0.55 & 
0.44 & 
0.64
\\
- Reflected Ray Direction Conditioning & 
\cellcolor{orange!25}0.39 & 
\cellcolor{yellow!25}0.61 & 
\cellcolor{yellow!25}0.40 & 
\cellcolor{yellow!25}0.37 & 
\cellcolor{yellow!25}0.78 & 
\cellcolor{yellow!25}0.60 & 
\cellcolor{orange!25}0.62 & 
\cellcolor{orange!25}1.09 & 
\cellcolor{yellow!25}0.98 & 
\cellcolor{yellow!25}0.61 & 
\cellcolor{orange!25}0.50 & 
0.80 & 
\cellcolor{yellow!25}0.35 & 
\cellcolor{red!25}0.47 & 
\cellcolor{orange!25}0.39 & 
0.60
\\ \hline
\end{tabular}
\caption{Quantitative evalution on the DTU dataset based on the Chamfer Distance for different ablation models. The best metric is highlighted in red, the second best metric is highlighted in orange, and the third best metric is highlighted in yellow.}
\label{table:ablation}
\end{table*}

\section{Additional Evaluation Results}
\label{sec:eval-results}

\begin{figure}[t]
    \centering
    \includegraphics[width=\linewidth]{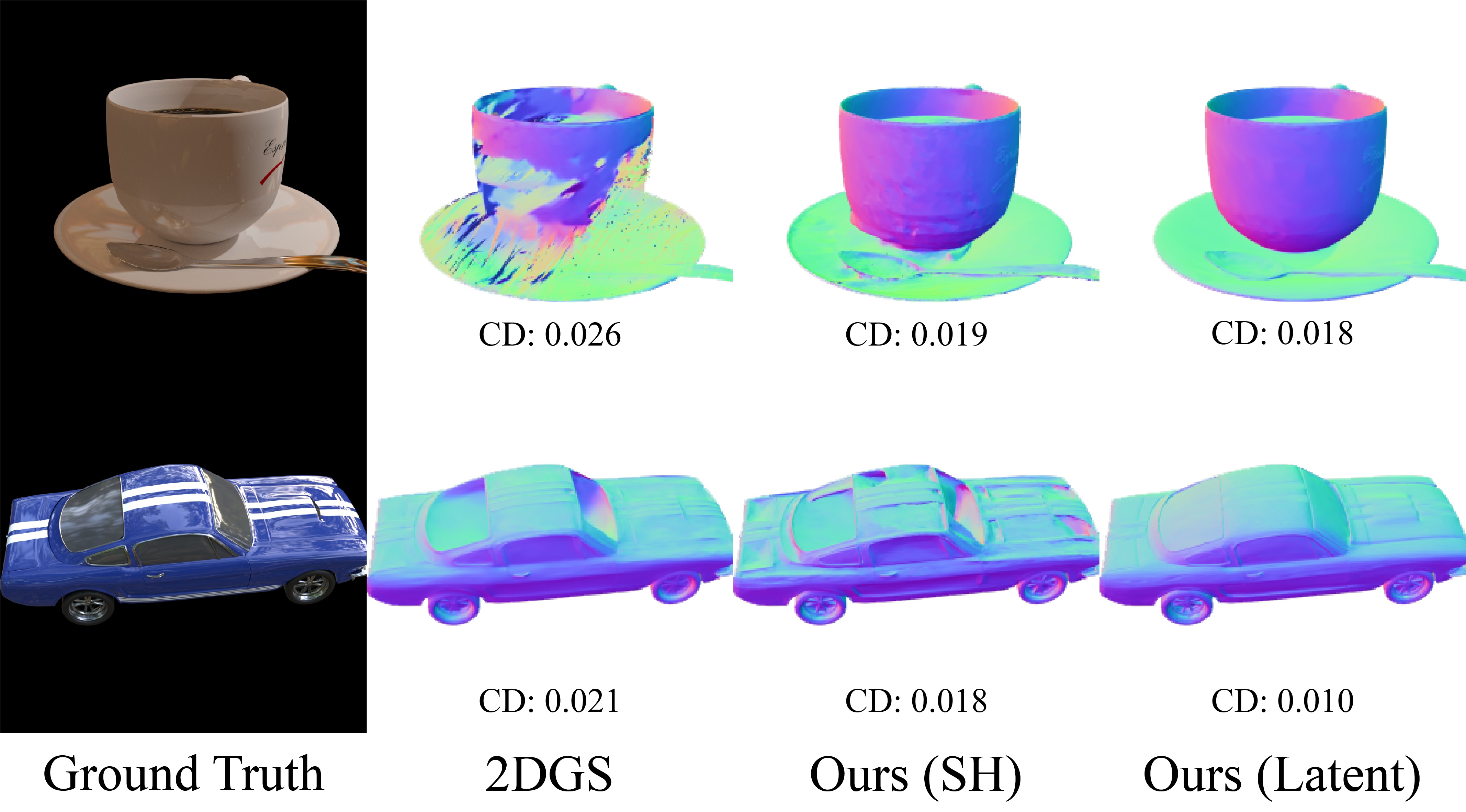}
    \caption{{Comparisons of normal maps and chamfer distances (denoted as ``CD''; lower is better) of reconstructed geometry for highly specular surfaces (first column) from the ShinyBlender dataset \cite{refnerf}.}}
    \label{fig:ablation_color_representation}
\end{figure}

\paragraph{Evaluation Details.} 
We find that the evaluation of mesh quality is also dependent on how well the mesh is cleaned after extraction. 
Typically, SDF-based approaches use the marching cube algorithm to produce closed-surface meshes which differ from the open-surface ground-truth and then contain redundant parts. 
In contrast, splatting-based approaches typically use the TSDF fusion algorithm to produce open-surface meshes based on the depth maps, which are more concise. 
It then becomes necessary to clean triangles, which are invisible from all training views, to have a fair unified evaluation protocol. In practice, we combine the scripts from \cite{long2022sparseneus} and \cite{2dgs} to clean the mesh before evaluation.
As to the Neuralangelo, we find that its extracted mesh sometimes is enclosed in a sphere, which makes the mesh cleaning fail. Therefore, we manually remove the enclosing spheres if they exist for the extracted meshes of Neuralangelo before passing them into the evaluation.

\subsection{Quantitative Evaluation}
We evaluate the view synthesis quality of our method on the Mip-NeRF 360 dataset \cite{mipnerf360}, while comparing with NeRF \cite{nerf2020}, INGP \cite{instant-ngp}, MERF \cite{reiser2023merf}, BakedSDF \cite{BakedSDF}, MipNeRF 360 \cite{mipnerf360}, BOG \cite{BOG}, 3DGS \cite{3dgs}, SuGaR \cite{guedon2023sugar}, MipSplatting \cite{Yu2024MipSplatting}, 2DGS \cite{2dgs}, GOF \cite{GOF}, and RaDe-GS \cite{radegs}. We find that with geometry clearly defined and its corresponding regularization, the view synthesis is harmed.

As to the rendering speed during inference, since our proposed color propagation is irrelevant to the view, it can be achieved and cached before rendering, thus does not impact the rendering speed.

\begin{table}[t]
\centering
\footnotesize
\setlength{\tabcolsep}{3pt}
\begin{tabular}{c|ccc|ccc}
\hline
\textbf{}        & \multicolumn{3}{c|}{\textbf{Outdoor Scenes}}   & \multicolumn{3}{c}{\textbf{Indoor Scenes}}     \\
\textbf{Methods} & \textbf{PSNR $\uparrow$} & \textbf{SSIM $\uparrow$} & \textbf{LPIPS $\downarrow$} & \textbf{PSNR $\uparrow$} & \textbf{SSIM $\uparrow$} & \textbf{LPIPS $\downarrow$} \\ \hline
NeRF             & 21.46         & 0.458         & 0.515          & 26.84         & 0.790         & 0.370          \\
INGP      & 22.90         & 0.566         & 0.371          & 29.15         & 0.880         & 0.216          \\
MERF             & 23.19         & 0.616         & 0.343          & 27.80         & 0.855         & 0.271          \\
BakedSDF         & 22.47         & 0.585         & 0.349          & 27.06         & 0.836         & 0.258          \\
MipNeRF 360      & 24.47         & 0.691         & 0.283          & \cellcolor{red!25}{31.72}         & 0.917         & 0.180          \\ \hline
BOG              & 23.94         & 0.680         & 0.263          & 27.71         & 0.873         & 0.227          \\
3DGS             & 24.64         & {0.731}         & 0.234          & 30.41         & 0.920         & \cellcolor{yellow!25}{0.189}          \\
SuGaR            & 22.93         & 0.629         & 0.356          & 29.43         & 0.906         & 0.225          \\
MipSplatting    & \cellcolor{yellow!25}{24.65}         & 0.729         & 0.245          & \cellcolor{orange!25}{30.90}         & \cellcolor{yellow!25}{0.921}         & 0.194          \\
2DGS  & 24.18 & 0.703 & 0.287 & 30.06 & 0.909 & 0.213 \\
GOF              & \cellcolor{orange!25}{24.82}         & \cellcolor{orange!25}{0.750}         & \cellcolor{orange!25}{0.202}          & \cellcolor{yellow!25}{30.79}         & \cellcolor{orange!25}{0.924}         & \cellcolor{orange!25}{0.184}          \\
RaDe-GS          & \cellcolor{red!25}{25.17}         & \cellcolor{red!25}{0.764}         & \cellcolor{red!25}{0.199}          & 30.74         & \cellcolor{red!25}{0.928}         & \cellcolor{red!25}{0.165}          \\ \hline
Ours (SH)        & 24.40  &  \cellcolor{yellow!25}{0.734} & \cellcolor{yellow!25}{0.224} & 29.93 & 0.916 & 0.194 \\
Ours (Latent)   & 23.76 & 0.693 & 0.293 & 29.92 & 0.906 & 0.219 \\
\hline
\end{tabular}
\caption{Quantitative comparison on the Mip-NeRF 360 dataset based on the view synthesis.}
\label{table:view-synthesis}
\end{table}

\subsection{Qualitative Evaluation}
We show complete rendering of all cases on the DTU dataset \cite{dtu} and BlendedMVS dataset \cite{yao2020blendedmvs} between our method and 2DGS \cite{2dgs} in \cref{fig:dtu-full}, \cref{fig:bmvs-full}, and \cref{fig:bmvs-scene-full}. 
In the 2DGS algorithm, using the median depth in the regularization leads to better quantitative results, while using the mean depth in the regularization produces smoother geometry and more visually pleasing results. 
In \cref{fig:dtu-full}, we compare with 2DGS trained with both settings. We capture more geometric details than those trained with mean depth in 2DGS, and are free of cracks and holes compared to those trained with median depth in 2DGS. 
We also show the extracted mesh on a few scenes of MipNeRF 360 dataset in \cref{fig:fig:mipnerf-scene} using our method and TSDF fusion.
{We further provide preliminary comparison results for two selected cases on the ShinyBlender dataset \cite{refnerf} in Fig.~\ref{fig:ablation_color_representation} to demonstrate the capability of our method to generalize to highly specular surfaces.}

We also test our method on the Tanks\&Temples dataset \cite{tanks} but this dataset does not provide ground-truth camera parameters, which requires an iterative-closest-point procedure to align the extracted mesh and ground-truth mesh. After manual inspection, we find that the metrics heavily depend on how well this off-the-shelf alignment algorithm aligns the extracted mesh and ground-truth mesh, and it actually fails on two scenes. Therefore, we choose to only show our qualitative results in \cref{fig:tnt-scene-full}.

\subsection{Ablation Study}
We provide the complete evaluation results on the DTU dataset for all our ablation models in \cref{table:ablation}. Specifically, we evaluate with our two color representations, i.e., SH representation and latent representation. As to the latent representation, we also ablate on the choice of conditioning for MLP.
We conduct all the ablation experiments with the same hyper-parameters with the baseline.

By comparing the baseline with the ablation model without geometry field splatting, it is clear that the geometry field splatting instead of the original approximate rendering formulation in 2DGS significantly boosts the performance. However, we also find that the geometry field splatting makes the SH representation more sensitive to the specular surfaces, which is reflected in the ``scan110''. 
By comparing the baseline with the ablation model without the remedy to loss landscape defects, we can see that our proposed color blending consistently improves the performance on almost every case, especially for the SH representation. 
The ablation model which implements per-ray sorting achieves better averaged performance, but does not always achieve the best performance on every case. We argue that it may be due to the fact that the hyper-parameters are tailored for the global sorting approximation, and not tuned for the per-ray sorting case. Implementing \cite{radl2024stopthepop} with our algorithm while tuning relevant hyper-parameters may further improve our performance.

As to the latent representation, without ray direction conditioning or reflected ray direction conditioning, the reconstructed geometry suffers.

\section{Additional Discussions}
In this paper, we focus on evaluating our proposed geometry field representation and remedy to loss landscape defects, and therefore do not incorporate many other methods which could further improve the geometry reconstruction. However, we identify a few potential ways here which may further improve the geometry reconstruction with our method.

\paragraph{Anti-aliasing.} As discussed in \cite{2dgs}, a 2D Gaussian is degenerated into a line in the screen space while being observed from a slanted viewpoint. Besides, as discussed in \cite{volume_splatting}, the output image has to be band-limited to avoid aliasing. We follow \cite{2dgs} to only apply an object-space low-pass filter \cite{10.5555/2386366.2386369}, which actually does not follow our rendering model. Similar to \cite{Yu2024MipSplatting}, it would be useful to design an anti-aliasing method which is tailored for our rendering model.
\paragraph{Appearance Embedding.} We assume that the scene is static and the images are captured in the same lighting condition. 
However, there could be slight variance of lighting conditions, and brightness of the images, etc., due to the capture. Therefore, it would be useful to have an appearance embedding per image as in \cite{pgsr, radegs, GOF}. 
\paragraph{Color Representation.} {We choose to use a latent representation for color that is as simple as possible to improve on the specular cases and emphasize more on our proposed geometric representation. In the future, it will be beneficial to further extend our method with better latent representations as in \cite{gaussianshader} to work on highly specular surfaces.}
\paragraph{Multi-view Stereo Constraint.} Concurrently, 
\citet{pgsr} shows that the multi-view stereo constraint could be helpful for the geometry reconstruction quality using the volume splatting representation. It would be useful to apply the multi-view constraint with our clearly defined geometry, which may further improve the reconstruction.

\paragraph{Incorporate Monocular Priors.} In the dense-views geometry reconstruction task, the captured images are assumed to be sufficient to recover the geometry. However, there may not be clear standard to decide whether the captured images are sufficient to define a unique solution, and, in practice, the captured images are usually not enough. 
As in \cite{gaussian_surfel}, it would be useful to incorporate modern monocular priors, such as the estimated monocular normal (e.g., \cite{zeng2024rgb}), to compensate for the deficiency of input images, especially for the scene-level cases.

\paragraph{Densification Strategies.} We observe that, due to our refined splatting algorithm, a Gaussian surfel could have a much larger fully opaque area. Even though this property benefits the geometry reconstruction overall, it makes the optimization and densification harder, which is also reflected in the affected view synthesis quality. 
Since the optimization is stochastic, a Gaussian surfel could have a  large fully opaque area. Besides, the enforced depth-normal consistency loss effectively squeezes the Gaussian surfel, which makes the large fully opaque area difficult to reduce. It then blocks the optimization towards the other Gaussian surfels it occludes. 
Even though we periodically reset the geometry value on the Gaussian surfel to make it less opaque, we find that its geometry value will still recover fast instead of shrinking its size.

We find that using the densification strategy proposed by \cite{ye2024absgs} leads to smaller Gaussian surfels in general without losing the quality, which alleviates such a problem. However, a different strategy which may fully solve this problem is welcome.

\paragraph{Mesh Extraction.} 
We follow \cite{2dgs} to use the TSDF fusion to extract the mesh. Even though it gives reasonable results, it is slow as it operates on the CPU and it also cannot handle thin structures such as the strips on the wheels of the bicycle well. In contrast, we also find the marching tetrahedra proposed by \cite{GOF} works sub-optimally on the object-centric cases.
With the advance of explicit primitives, i.e., Gaussian kernels, it would be useful to have a mesh extraction method that may directly convert primitives of different representations.

\clearpage

\begin{figure*}[t!]
    \centering
    \vfill %
    \includegraphics[width=0.73\linewidth]{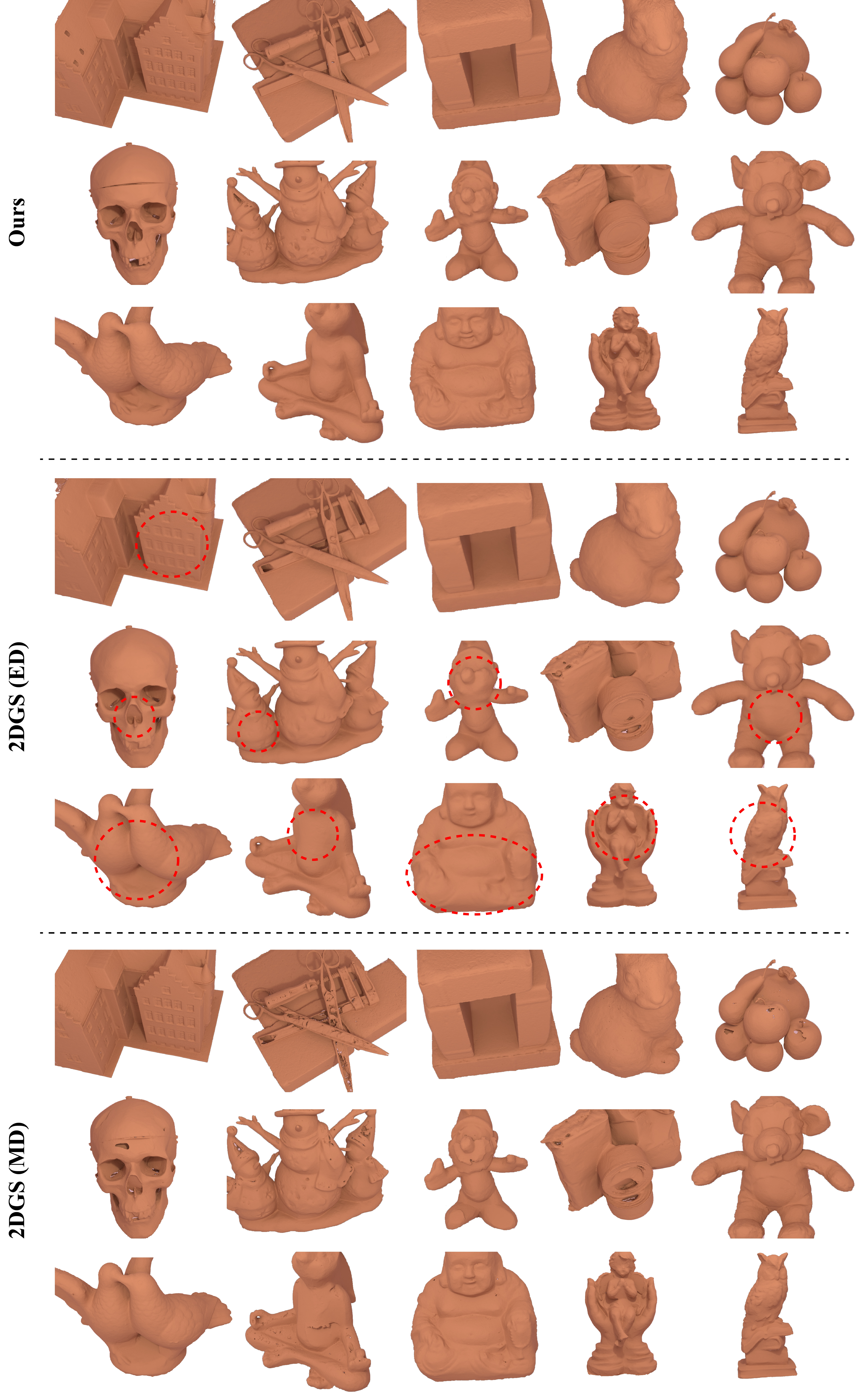}
    
    \caption{Comparison of rendering of meshes extracted by our method (at the top), 2DGS trained with mean depth in the regularization (in the middle), and 2DGS trained with median depth in the regularization (at the bottom) on the DTU dataset. Non-obvious differences are highlighted in red circles. The reader may wish to zoom into the electronic version in the figures.}
    \label{fig:dtu-full}
\end{figure*}

\clearpage

\begin{figure*}[t!]
    \centering
    \vfill %
    \includegraphics[width=\linewidth]{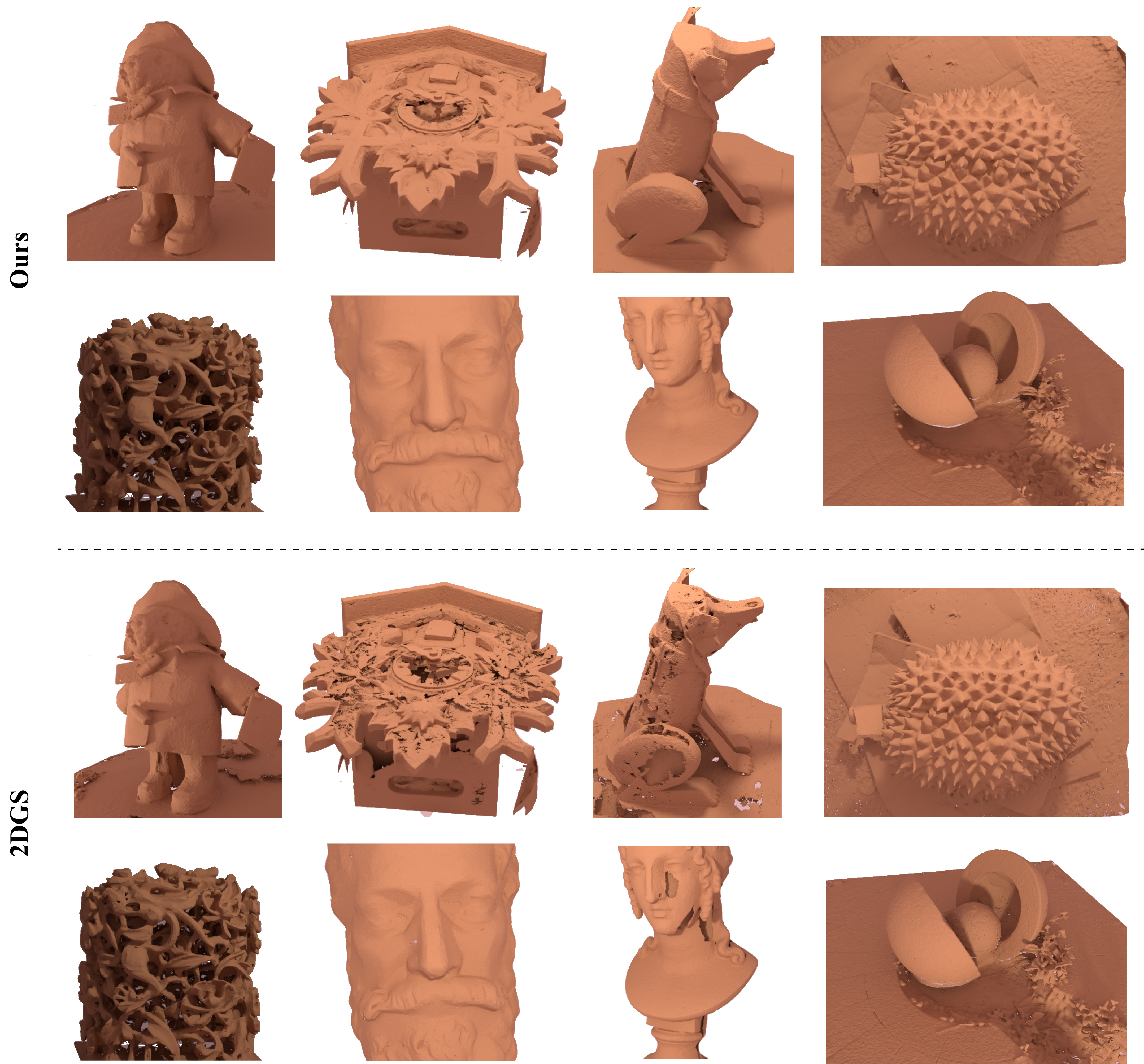}
    
    \caption{Comparison of rendering of meshes extracted by our method (at the top) and 2DGS (at the bottom) on the object cases of the BMVS dataset.}
    \label{fig:bmvs-full}
\end{figure*}

\clearpage

\begin{figure*}[t!]
    \centering
    \includegraphics[width=0.8\linewidth]{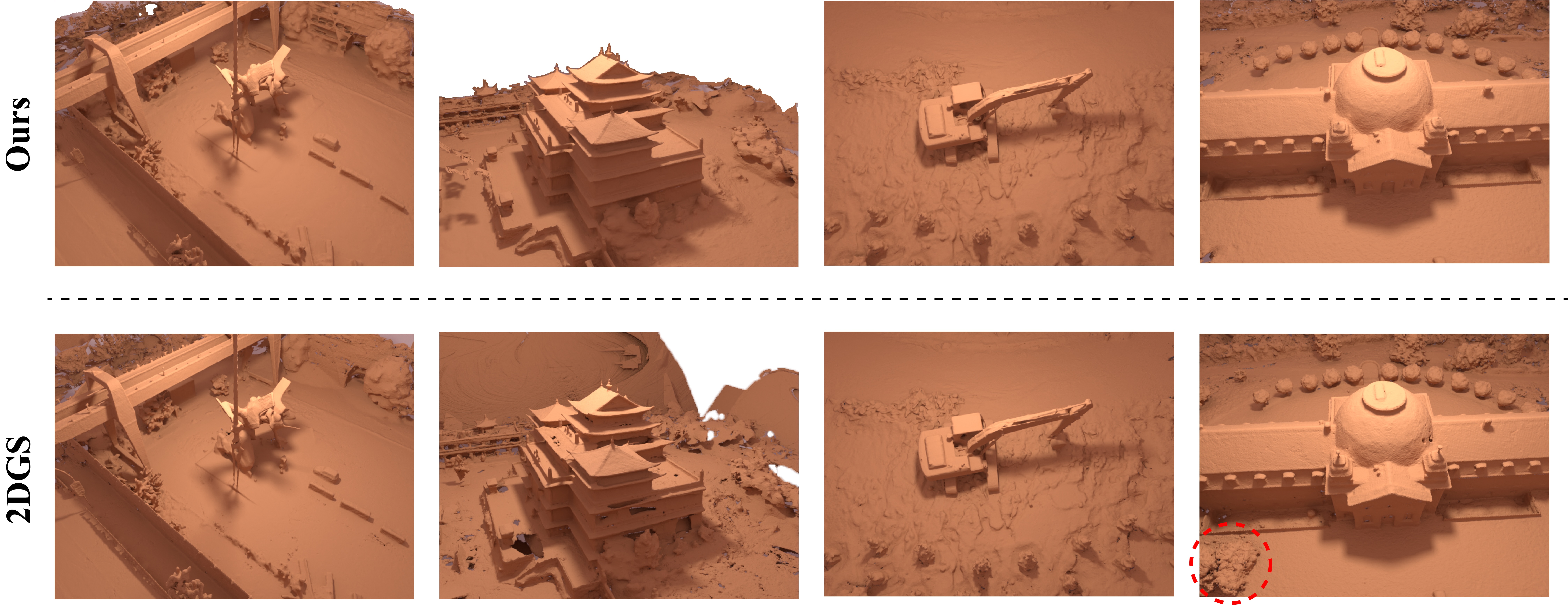}
    
    \caption{Comparison of rendering of meshes extracted by our method (at the top) and 2DGS (at the bottom) on the scene cases of BMVS dataset. Non-obvious differences are highlighted in red circles.}
    \label{fig:bmvs-scene-full}
\end{figure*}

\begin{figure*}[t!]
    \centering
    \includegraphics[width=0.85\linewidth]{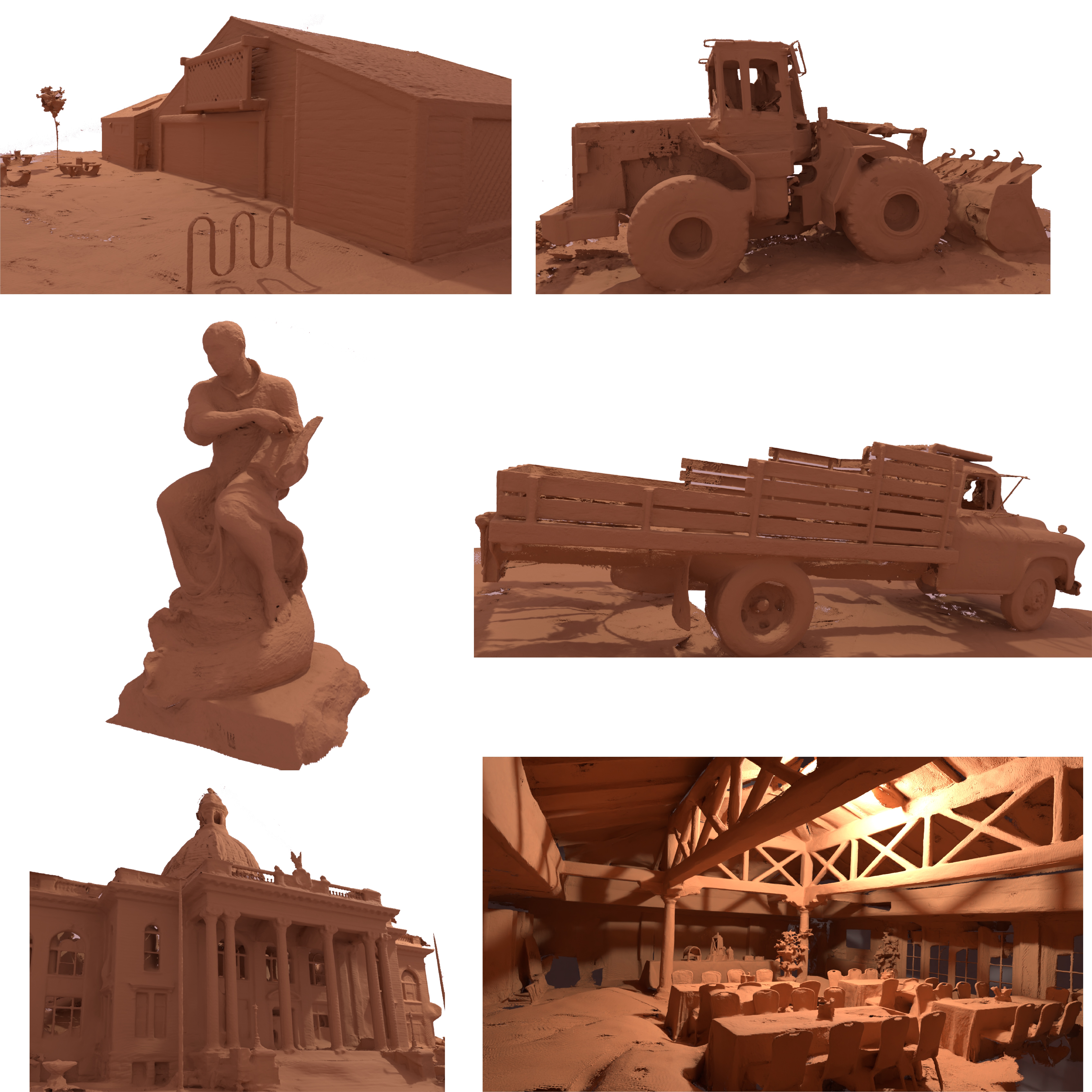}
    
    \caption{Rendering of meshes extracted by our method on the Tanks\&Temples dataset.}
    \label{fig:tnt-scene-full}
\end{figure*}

\clearpage

\begin{figure*}[t!]
    \centering
    \vfill %
    \includegraphics[width=0.8\linewidth]{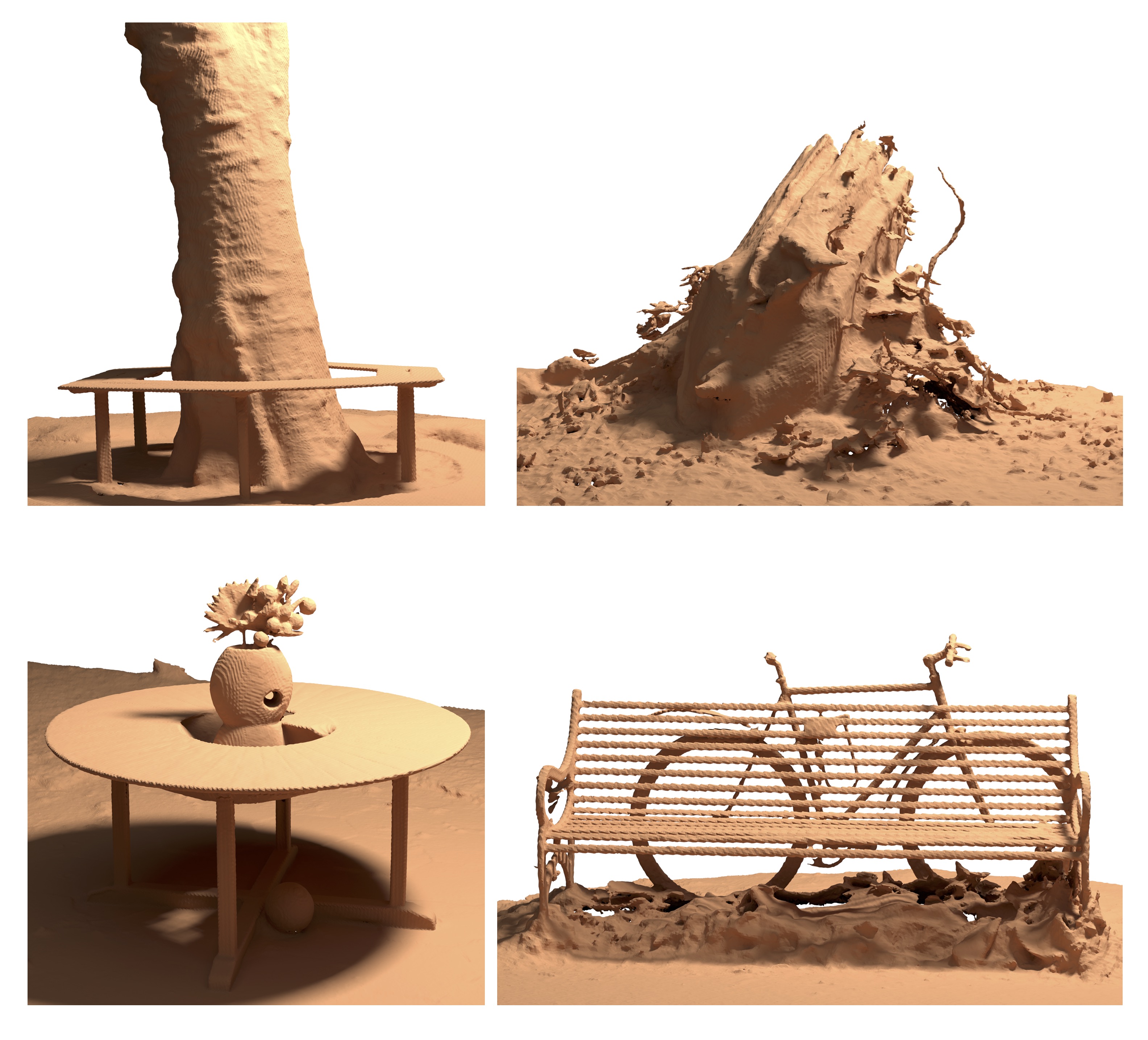}
    
    \caption{Rendering of meshes extracted by our method on the Treehill, Garden, Bicycle and Stump scenes of MipNeRF 360 dataset.}
    \label{fig:fig:mipnerf-scene}
\end{figure*} \fi

\end{document}